\newcommand{\LC}{\left(}
\newcommand{\RC}{\right)}
\newtheorem{lemma}{Lemma}[section]
\newtheorem{theorem}{Theorem}[section]
\newcommand{\beq}{\begin{equation}}
\newcommand{\eeq}{\end{equation}}
\newcommand{\ben}{\begin{eqnarray}}
\newcommand{\een}{\end{eqnarray}}
\newcommand{\beno}{\begin{eqnarray*}}
\newcommand{\eeno}{\end{eqnarray*}}
\newdimen\eqjot \eqjot = 1\jot
\def\openupeq{\openup \the\eqjot}
\begin{document}

\title [Short-Pulse hierarchy and Sine-Gordon hierarchy]
{Liouville correspondence between the Short-Pulse Hierarchy and the Sine-Gordon Hierarchy}

\author{Jing Kang}
\address{Jing Kang\newline
Center for Nonlinear Studies and School of Mathematics, Northwest University, Xi'an 710069, P. R. China}
\email{jingkang@nwu.edu.cn}
\author{Xiaochuan Liu}
\address{Xiaochuan Liu\newline
Center for Nonlinear Studies and School of Mathematics, Northwest University, Xi'an 710069, P. R. China}
\email{liuxc@nwu.edu.cn}

\begin{abstract}
This paper considers the whole hierarchy of bi-Hamiltonian integrable equations associated to each of the Short-Pulse (SP) equation and the Sine-Gordon (SG) equation. We prove that the transformation that relates the SP equation with the SG equation also serves to establish the correspondence between their flows and Hamiltonian conservation laws in respective hierarchy.
\end{abstract}

\maketitle \numberwithin{equation}{section}


\small {\it Key words and phrases:}\ Liouville transformation; Short-Pulse hierarchy; Sine-Gordon hierarchy; Bi-Hamiltonian systems; Hamiltonian conservation laws.\\

\small 2000 {\it Mathematics Subject Classification}\/
:\; 37K05, 37K10.

\section{Introduction}

In this paper, we study the correspondence between the Short-Pulse (SP) integrable hierarchy, which is initiated with the SP equation \cite{sw}
\begin{equation}\label{sp}
u_{xt}=u+\frac{1}{6}(u^3)_{xx},
\end{equation}
and the classical Sine-Gordon (SG) integrable hierarchy, which is associated with the SG equation \cite{as, olv0}
\begin{equation}\label{sg}
Q_{y\tau}=\sin Q
\end{equation}
and contains the potential modified KdV (pmKdV) equation
\begin{equation*}
Q_{\tau}=Q_{yyy}+\frac{1}{2}Q_y^3.
\end{equation*}

Recently, much attention devoted to the SP equation \eqref{sp} has been paid in the theory of nonlinear waves because of its significant backgrounds. Physically, equation \eqref{sp} was proposed as a model equation in an appropriate dimensionless form to describe the evolution of ultra-short pulses in nonlinear medium of the silica optical fibre, where $u=u(t, x)$ represents the magnitude of the electric field \cite{sw}. Due to the numerical analysis presented in \cite{cjsw}, the SP equation \eqref{sp} has been recognized as a better alternative of the cubic nonlinear Schr\"{o}dinger equation to approximate the solution of Maxwell equation in the case when the pulse length shortens. Geometrically, equation \eqref{sp} appeared as one of Rabelo's equations associated with pseudo-spherical surfaces \cite{brt, rab}. In addition, the integrability of the SP equation has also been investigated from various points of view. It possesses a zero-curvature representation with a parameter in the context of differential geometry \cite{brt}, admits a Lax pair with the linear spectral problem of the Wadati-Konno-Ichikawa (WKI) type \cite{ss1} and supports the bi-Hamiltonian structure \cite{bru1, bru2}.  

More interestingly, the novel connection which relates the SP equation \eqref{sp} to the classical SG equation \eqref{sg} was found and the associated transformation was used to derive exact soliton solutions of the SP equation from the well-known ones of the SG equation. Some of these resulting solitons of the SP equation are regarded as suitable solutions representing the propagation of ultra-short pulses. Specifically, Sakovich and Sakovich constructed the chain of transformations between \eqref{sp} and \eqref{sg} \cite{ss1, ss3} and obtained the loop-soliton solutions as well as smooth-soliton solutions of the SP equation \cite{ss2}. Matsuno proposed the Hodograph-type transformation between \eqref{sp} and \eqref{sg} to derive the multi-soliton solutions including multi-loop and multi-breather ones, and periodic soliton solutions of the SP equation \cite{mats1, mats2}. Many other results were established with respect to the soliton solutions of the SP equation, see \cite{kbk} and \cite{park}, etc.

Since the SP equation \eqref{sp} and the SG equation \eqref{sg} both admit the bi-Hamiltonian feature, each of them is associated with the corresponding infinite bi-Hamiltonian integrable hierarchy due to the Magri's theory \cite{mag, olv}. In view of the correspondence between the SP equation and the SG equation, it is anticipated that the respective hierarchy should be related in a certain manner. More precisely, can the transformation that converts the SP equation into the SG equation also connects the other flows in the two integrable hierarchies? Is it possible to construct a relationship between the Hamiltonian conservation laws of the SP hierarchy and the SG hierarchy? 

Recently, similar investigation about the correspondence between two integrable hierarchies has been considered conprehensively. For example, the correspondence between the Camassa-Holm (CH) integrable hierarchy and the KdV integrable hierarchy was established by the Liouville transformation in \cite{len0} and \cite{mc}; see also \cite{bss98, bss}. More explicitly, through the Liouville transformation connecting the respective iso-spectral problem, the positive and negative flows of the CH hierarchy are generated by the negative and positive flows of the KdV hierarchy respectively. The correspondence between the Hamiltonian conservation laws of the CH hierarchy and the KdV hierarchy is also derived in \cite{len0}. In a recent paper \cite{kloq}, the authors with collaborators established the Liouville correspondence between the integrable modified Camassa-Holm (mCH) hierarchy and the integrable modified KdV (mKdV) hierarchy (both integrable hierarchies in the negative direction begin with the corresponding Casimir flows). The integrable mCH hierarchy is initiated with the following nonlinear evolution equation
\begin{equation}\label{mch}
m_t+\Big( \LC u^2-u_x^2\RC \,m\Big)_x=0, \qquad m=u-u_{xx},
\end{equation}
known as the mCH equation, that has been studied extensively in recent years (see \cite{bgr, fuc, gloq, llq, or} and references therein).  It is worth noting that the SP equation \eqref{sp} is regarded as the scaling limit equation of the mCH equation \eqref{mch} with the first-order term $u_x$ \cite{gloq}.

The main topic of the present paper is to investigate the correspondence between the integrable SP and SG hierarchies. The corresponding Liouville transformation that takes the following form
\begin{equation}\label{liou}
\cos Q(\tau, y)=\frac{1}{\sqrt{1+u_x^2(t, x)}}, \quad y=\int^x \sqrt{1+u_x^2}, \quad \tau=t,
\end{equation}
not only converts the SP equation \eqref{sp} into the SG equation \eqref{sg}, but also relates the iso-spectral problems of the SP hierarchy and the SG hierarchy. In fact, the transformation \eqref{liou} is equivalent to the transformations given in \cite{mats1, ss1}. Based on the Liouville transformation \eqref{liou}, we are able to construct certain nontrivial relation between the respective recursion operators. Adhering to the particular structures of the corresponding flows, we establish a one-to-one Liouville correspondence between the two integrable hierarchy. Moreover, we study the relationship between the Hamiltonian conservation laws for the SP hierarchy and those for the SG hierarchy through the transformation \eqref{liou}.  The associated conservation laws play a crucial role in the investigation of the qualitative properties such as well-posedness, wave-breaking, etc. for the SP equation \eqref{sp}; see for example \cite{lps} and \cite{ps}. Therefore, the induced relationship between the Hamiltonian conservation laws for the SP equation and the known ones of the SG equation turns out to be of value for studying the SP equation.

The remainder of this paper is organized as follows. In Section 2, we recall some known results on integrability of the mKdV equation, the SP equation, the SG equation and their corresponding hierarchies. The main results in this paper are also presented. In Section 3, we first present the Liouville transformation relating  the iso-spectral problems of the SP hierarchy and the SG hierarchy in Section 3.1. Next in Section 3.2, based on the particular structures of flows in the two hierarchies combined with the relationship between the respective recursion operators, we exploit the Liouville transformation to establish the one-to-one correspondence between the flows in the SP and the SG hierarchies. Section 4 deals with the hierarchy of the Hamiltonian conservation laws of the SP equation and the SG equation. It is proved that the Liouville transformation establishes the correspondence between the series of Hamiltonian conservation laws of the SP equation and the SG equation.

\section{Preliminaries and main results}

We begin with the mKdV equation
\begin{equation}\label{mkdv}
U_{\tau}=U_{yyy}+\frac{3}{2}\,U^2\,U_y,
\end{equation}
which can be written in the bi-Hamiltonian form
\begin{equation*}
U_{\tau}=\mathcal{L}_2\,\frac{\delta E_{1}}{\delta U}=\mathcal{L}_1\,\frac{\delta E_2}{\delta U},
\end{equation*}
where
\begin{equation}\label{haop-mkdv}
\mathcal{L}_1=\partial_y \quad \mathrm{and} \quad \mathcal{L}_2=\partial_y^3+\partial_yU\partial_y^{-1}U\partial_y 
\end{equation}
are the compatible Hamiltonian operators,  while the corresponding Hamiltonian functionals are given by  
\begin{equation*}
E_1(U)=\int \frac{1}{2}\,U^2 \mathrm{d}y \quad \mathrm{and} \quad
E_2(U)=\int \LC -\frac{1}{2}U_y^2+\frac{1}{8}U^4\RC \mathrm{d}y.
\end{equation*}

According to Magri's Theorem \cite{mag,olv}, for an integrable bi-Hamiltonian equation with two compatible Hamiltonian operators $\mathcal{L}_1$ and $\mathcal{L}_2$, we are able to recursively construct an infinite hierarchy 
\begin{equation}\label{hie-mkdv}
U_{\tau}=G_n[U]=\mathcal{L}_2\,\frac{\delta E_{n-1}}{\delta U}=\mathcal{L}_1\,\frac{\delta E_n}{\delta U},\qquad n\in \mathbb{Z},
\end{equation}
of higher-order commuting bi-Hamiltonian integrable systems both in positive and negative directions, based on  the higher-order Hamiltonian functionals $E_{n}$, $n\in \mathbb{Z}$, common to all members of the hierarchy. In the mKdV setting, the members in the hierarchy \eqref{hie-mkdv} are obtained by applying successively the recursion operator $\mathcal{E}=\mathcal{L}_2\,\mathcal{L}_1^{-1}$ to the seed symmetry $G_1[U]=U_y$. Clearly, the mKdV equation \eqref{mkdv} takes the form $U_{\tau}=\mathcal{E}U_y$ and is exactly the second member of the positive flows in this hierarchy. However, in view of the bi-Hamiltonian formulation of the seed equation:
\begin{equation}\label{mkdv-1}
U_{\tau}=G_1[U]=\mathcal{E}^0U_y=\mathcal{L}_2\,\frac{\delta E_{0}}{\delta U}=\mathcal{L}_1\,\frac{\delta E_1}{\delta U}=U_y,
\end{equation}
different choice of the Hamiltonian functional $E_0(U)$ results in different negative flows in the mKdV hierarchy \eqref{hie-mkdv}.

If we take 
\begin{equation*}
E_0(U)=\int \LC \cos\LC\partial_y^{-1}U\RC-1\RC \mathrm{d}y \quad \mathrm{with} \quad \frac{\delta E_0}{\delta U}=\partial_y^{-1}\sin\LC\partial_y^{-1}U\RC-\int \sin\LC\partial_y^{-1}U\RC \mathrm{d}y,
\end{equation*}
then using the sufficiently fast decay condition of $U$ as $|y|\to +\infty$, we have
\begin{equation*}
\begin{aligned}
\mathcal{L}_2\,\frac{\delta E_0}{\delta U} &=\LC \partial_y^2+\partial_yU\partial_y^{-1}U\RC \sin\LC\partial_y^{-1}U\RC\\
&=\partial_y\LC \cos\LC\partial_y^{-1}U\RC\,U\RC-\partial_y\LC U\int_{-\infty}^y\partial_{\xi}\cos\LC\partial_{\xi}^{-1}U\RC \mathrm{d}\xi\RC=U_y.
\end{aligned}
\end{equation*}
Hence, in this case, the negative flows of mKdV hierarchy \eqref{hie-mkdv} begin with
\begin{equation*}
U_{\tau}=G_0[U]=\mathcal{L}_1\,\frac{\delta E_0}{\delta U}=\sin\LC\partial_y^{-1}U\RC,
\end{equation*}
which is just the sine-Gordon (SG) equation
\begin{equation}\label{sg1}
Q_{y\tau}=\sin Q
\end{equation}
for the potential function $Q_y=U$.

Therefore, in such case, the negative flows of the mKdV hierarchy \eqref{hie-mkdv} have the form   
\begin{equation}\label{mkdv-n-1}
U_\tau=G_{-(n-1)}[U]=\LC \mathcal{L}_1\,\mathcal{L}_2^{-1}\RC^{n-1}\sin\LC\partial_y^{-1}U\RC, \quad n=1, 2, \ldots.
\end{equation}
At the $n$th stage, the associated potential function $Q=\partial_y^{-1}U$ satisfies 
\begin{equation*}
Q_\tau=\LC \bar{\mathcal{J}}\bar{\mathcal{K}}^{-1}\RC^{n-1}\bar{\mathcal{J}}\sin Q,
\end{equation*}
where 
\begin{equation}\label{haop-sg}
\bar{\mathcal{K}}=\partial_y+Q_y\partial_y^{-1}Q_y \quad \mathrm{and} \quad \bar{\mathcal{J}}=\partial_y^{-1}
\end{equation}
are the compatible Hamiltonian operators admitted by the SG equation \eqref{sg1}. In view of the Hamiltonian pair \eqref{haop-sg}, the SG equation \eqref{sg1} also admits a hierarchy consisting of an infinite number of integrable bi-Hamiltonian equations in both the positive and negative directions:
\begin{equation}\label{hie-sg}
Q_{\tau}=\bar{K}_n[Q]=\bar{\mathcal{K}}\,\frac{\delta \bar{\mathcal{H}}_{n-1}}{\delta Q}=\bar{\mathcal{J}}\,\frac{\delta \bar{\mathcal{H}}_n}{\delta Q},\qquad n\in \mathbb{Z}.
\end{equation}
These integrable flows in \eqref{hie-sg} could be obtained by applying successively the recursion operator $\bar{\mathcal{R}}=\bar{\mathcal{K}}\,\bar{\mathcal{J}}^{-1}$ to the corresponding seed symmetry, whose equation takes the following form:
\begin{equation*}
Q_{\tau}=\bar{K}_{1}[Q]=\bar{\mathcal{K}}\,\frac{\delta \bar{\mathcal{H}}_0}{\delta Q}=\bar{\mathcal{J}}\,\frac{\delta \bar{\mathcal{H}}_1}{\delta Q}=Q_y
\end{equation*}
with 
\begin{equation}\label{sg-h0h1} 
\bar{\mathcal{H}}_{0}(Q)=\int \LC -\cos Q+1\RC\mathrm{d}y \quad \mathrm{and} \quad \bar{\mathcal{H}}_{1}(Q)=-\frac{1}{2}\int Q_y^2\,\mathrm{d}y.
\end{equation} 
Observe that the SG equation \eqref{sg1} could be expressed exactly by
\begin{equation*}
Q_{\tau}=\bar{K}_{0}[Q]=\bar{\mathcal{R}}^{-1}Q_y=\bar{\mathcal{K}}\,\frac{\delta \bar{\mathcal{H}}_{-1}}{\delta Q}=\bar{\mathcal{J}}\frac{\delta \bar{\mathcal{H}}_0}{\delta Q}=\partial_y^{-1}\,\sin Q
\end{equation*}
with the associated Hamiltonian functional 
\begin{equation*}
\bar{\mathcal{H}}_{-1}(Q)=-\frac{1}{2}\int \cos Q \cdot \big( \partial_y^{-1}\sin Q \big)^2 \mathrm{d}y.
\end{equation*} 
Therefore, the SG equation \eqref{sg1} in its associated hierarchy \eqref{hie-sg} can be viewed as the first member in the negative direction. Furthermore, in the positive direction, the corresponding flows are
\begin{equation*}
Q_{\tau}=\bar{K}_{n}[Q]=\bar{\mathcal{R}}^{n-1}Q_y, \qquad n=1, 2, \ldots,
\end{equation*}
which includes the potential mKdV (pmKdV) equation 
\begin{equation}\label{pmkdv} 
Q_{\tau}=\bar{K}_2[Q]=\bar{\mathcal{R}}\,Q_y=Q_{yyy}+\frac{1}{2}Q_y^3
\end{equation} 
as the second member. 

Whereas, with respect to the integrable mKdV hierarchy \eqref{hie-mkdv}, if we take the Hamiltonian functional $E_0(U)$ for its seed symmetry \eqref{mkdv-1} by
\begin{equation*}
E_0(U)=\int U \,\mathrm{d}y \quad \mathrm{with} \quad \frac{\delta E_0}{\delta U}=1,
\end{equation*}
then, based on the Hamiltonian pair \eqref{haop-mkdv}, it is direct to check that the negative flows of the mKdV hierarchy \eqref{hie-mkdv} are generated from the Casimir equation
\begin{equation*}
U_{\tau}=G_{-1}[U]=\mathcal{L}_1\,\frac{\delta E_C}{\delta U} \quad \mathrm{with} \quad \mathcal{L}_2\,\frac{\delta E_C}{\delta U}=\mathcal{L}_1\,\frac{\delta E_0}{\delta U}=0.
\end{equation*}
The corresponding negative flows $U_{\tau}=G_{-n}[U]$, $n=1, 2, \ldots$, can be rewritten as 
\begin{equation*}
\LC \mathcal{L}_2\,\mathcal{L}_1^{-1}\RC^n U_{\tau}=\partial_y\LC \partial_y+U\partial_y^{-1}U\RC \LC \mathcal{L}_2\,\mathcal{L}_1^{-1}\RC^{n-1}U_{\tau}=0.
\end{equation*}
Integrating the above equation, we have
\begin{equation}\label{mkdv-n-2}
\LC \partial_y+U\partial_y^{-1}U\RC \LC \mathcal{L}_2\,\mathcal{L}_1^{-1}\RC^{n-1}U_\tau=C_{-n},
\end{equation}
with $C_{-n}$ being the corresponding constants of integration for the flows $U_{\tau}=G_{-n}[U]$, $n=1, 2, \ldots$. It was proved in \cite{kloq} that, in such case, the Liouville transformation that relates the corresponding iso-spectral problems establishes the one-to-one correspondence between the flows in the integrable mCH hierarchy initiated with the mCH equation \eqref{mch} and the mKdV hierarchy with negative flows given by \eqref{mkdv-n-2}. More precisely, for each $n\in  \mathbb{Z}^+$, under the Liouville transformation, the $(n+1)$th equation in the positive mCH hierarchy is mapped into the $n$th equation \eqref{mkdv-n-2} in the negative mKdV hierarchy, and conversely. In the opposite direction, the $(n+1)$th equation in the positive mKdV hierarchy is also related to the $n$th equation in the negative mCH hierarchy. 

However, this brings out a natural question, whether there exists the correspondence in a certain manner which can relate the mKdV hierarchy with negative flows given by \eqref{mkdv-n-1} or more suitably the integrable sine-Gordon hierarchy \eqref{hie-sg} to some other integrable hierarchy? It was proved in \cite{ss1} and \cite{ss3} that the SG equation \eqref{sg1} and the SP equation \eqref{sp} are related by the following chain of coordinate transformations
\begin{equation}\label{equi}
 v(t, x)=\frac{1}{\sqrt{1+u_x^2}},\quad x=w(t, y),\quad v(t,\,x)=w_y(t, y),\quad \tau=t, \quad Q(\tau, y)=\arccos w_y.
\end{equation}
Moreover, a Hodograph-type transformation also reveals the connection between the SP equation and the SG equation \cite{mats1}. In view of these results, it is anticipated to extend such a relationship between the SG equation and the SP equation to their respective integrable hierarchies. In other words, we desire to establish the one-to-one correspondence between the whole flows in the SG hierarchy and the SP hierarchy, as well as the Hamiltonian functionals involved. 

As far as the SP hierarchy is concerned, it is associated with the SP equation \eqref{sp}, which can be rewritten as 
\begin{equation}\label{sp1}
u_t=\partial_x^{-1} u+\frac{1}{2}u^2u_x
\end{equation}
and admits the bi-Hamiltonian formulation of the following form \cite{bru1, bru2}
\begin{equation*}
u_t=\mathcal{K}\,\frac{\delta \mathcal{H}_1}{\delta u}=\mathcal{J}\,\frac{\delta \mathcal{H}_2}{\delta u},
\end{equation*}
where the compatible Hamiltonian operators are given by
\begin{equation}\label{sp-bihaop}
\mathcal{K}=\partial_x^{-1}+u_x\,\partial_x^{-1}\,u_x \qquad \mathrm{and} \qquad
\mathcal{J}=\partial_x
\end{equation}
and the associated  Hamiltonian functionals are
\begin{equation}\label{sp-ha12}
\mathcal{H}_1(u)=\int \frac{1}{2}\,u^2\mathrm{d}x \quad \mathrm{and} \quad
\mathcal{H}_2(u)=\int \LC -\frac{1}{2}(\partial_x^{-1}u)^2+\frac{1}{24}u^4\RC\mathrm{d}x.
\end{equation}
Therefore, the integrable SP hierarchy which consists of an infinite number of higher-order bi-Hamiltonian systems
\begin{equation}\label{hie-sp}
u_t=K_n[u]=\mathcal{K}\,\frac{\delta \mathcal{H}_{n-1}}{\delta u}=\mathcal{J}\,\frac{\delta \mathcal{H}_n}{\delta u},\qquad n \in \mathbb{Z},
\end{equation}
can also be readily constructed by applying successively the recursion operator $\mathcal{R}=\mathcal{K}\,\mathcal{J}^{-1}$ to the seed symmetry $K_1[u]=u_x$. It is obvious that the SP equation \eqref{sp1} in this hierarchy is exactly the second member $u_t=K_2[u]=\mathcal{R} u_x$ of the positive flows. While, in the negative direction, it follows from 
\begin{equation*}
u_t=K_1[u]=\mathcal{K}\,\frac{\delta \mathcal{H}_0}{\delta u}=\mathcal{J}\,\frac{\delta \mathcal{H}_1}{\delta u}=u_x
\end{equation*}
with the associated Hamiltonian functional
\begin{equation}\label{sp-h0}
\mathcal{H}_0(u)=\int \LC -\sqrt{1+u_x^2}+1\RC \mathrm{d}x \quad \mathrm{and} \quad \frac{\delta \mathcal{H}_0}{\delta u}=\LC \frac{u_{x}}{\sqrt{1+u_x^2}}\RC_x,
\end{equation}
that the negative flows in the SP hierarchy \eqref{hie-sp} begin with the following equation 
\begin{equation}\label{wki}
u_t=K_{0}[u]=\mathcal{R}^{-1}u_x=\mathcal{J}\,\frac{\delta \mathcal{H}_0}{\delta u}=\LC \frac{u_{x}}{\sqrt{1+u_x^2}}\RC_{xx},
\end{equation}
which is known as the WKI equation describing the nonlinear transverse oscillations of elastic beams under tension \cite{wki}. Actually, it is implies in \cite{ss1} that the WKI equation \eqref{wki} is connected by the transformation \eqref{equi} with the pmKdV equation \eqref{pmkdv} which belongs to the SG hierarchy  \eqref{hie-sg}. Therefore, these arguments indicate to consider the correspondence between the SP hierarchy \eqref{hie-sp} and the SG hierarchy \eqref{hie-sg}.

In the present paper, with the aim to investigate the correspondence between the SG hierarchy and the SP hierarchy, we start from the perspective of the iso-spectral problems for the SG hierarchy 
\begin{equation}\label{iso-sg}
\mathbf{\Phi}_y
=\begin{pmatrix} \lambda \cos Q \; & \lambda \sin Q \\
                \lambda \sin Q\; & -\lambda \cos Q 
               \end{pmatrix}   \mathbf{\Phi},
\end{equation} and the SP hiearachy
\begin{equation}\label{iso-sp}
\mathbf{\Psi}_x
=\begin{pmatrix}\mu\; &\mu u_x\\
              \mu u_x\; &-\mu
               \end{pmatrix}\mathbf{\Psi},
\end{equation}
where  $\lambda$ and $\mu$ are the respective spectral parameters. And then we show that the Liouville transformation
\begin{equation}\label{liouville}
Q=\arccos\frac{1}{\sqrt{1+u_x^2}},\quad y=\int^x \sqrt{1+u_\xi^2} \,\mathrm{d}\xi
\end{equation}
relating the isospectral problems \eqref{iso-sg} and \eqref{iso-sp} will establish the one-to-one correspondence between the flows in the the SP hiearachy
and the SG hierarchy. Usually, the process of going from one spectral problem to another one by means of a change of variables has been recognized as a form of the classical Liouville transformation, which arises naturally in the context of the so-called WKB approximation; see \cite{mil, folv}.  Note further that transformation  \eqref{liouville}  together with $\tau=t$ are actually equivalent to the chain of transformations \eqref{equi} and the Hodograph-type transformation introduced in \cite{mats1}. The main result on the Liouville correspondence between two hierarchies is the following. 

\begin{theorem}\label{t1.1}
For any integer $n\in \mathbb{Z}$, the $(n+1)$th equation $u_t=K_{n+1}[u]$ in the SP hierarchy \eqref{hie-sp} is mapped into $(1-n)$th equation $Q_{\tau}=\bar{K}_{1-n}[Q]$ in the SG hierarchy \eqref{hie-sg} under the Liouville transformation \eqref{liouville} and $\tau=t$, and conversely.
\end{theorem}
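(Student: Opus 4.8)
The plan is to reduce the asserted index correspondence to two facts — a relation between the two recursion operators and the correspondence of the seed symmetries — and then close the argument by induction on $n$. Recall from \eqref{hie-sp} and \eqref{hie-sg} that the flows are generated by $K_{n+1}[u]=\mathcal{R}^{n}u_x$ and $\bar{K}_{1-n}[Q]=\bar{\mathcal{R}}^{-n}Q_y$, where $\mathcal{R}=\mathcal{K}\mathcal{J}^{-1}$ and $\bar{\mathcal{R}}=\bar{\mathcal{K}}\bar{\mathcal{J}}^{-1}$ are built from the Hamiltonian pairs \eqref{sp-bihaop} and \eqref{haop-sg}. Hence it suffices to prove that the Liouville transformation \eqref{liouville} together with $\tau=t$ carries the seed $u_x$ to the seed $Q_y$ and intertwines $\mathcal{R}$ with $\bar{\mathcal{R}}^{-1}$; the opposite direction is then automatic, since \eqref{liouville} is invertible.

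First I would extract the infinitesimal content of \eqref{liouville}. Differentiating $\cos Q=(1+u_x^2)^{-1/2}$ and $y=\int^x\sqrt{1+u_\xi^2}\,\mathrm{d}\xi$ yields the spatial rule $\partial_y=\cos Q\,\partial_x$ along with $u_x=\tan Q$ and $Q_y=\partial_x(\sin Q)$. Because $y$ depends on $t$ through $u$, the chain rule gives $Q_t|_x=\cos^2 Q\,u_{xt}$ and $\partial_t y|_x=\int^x\sin Q\,\partial_\xi(u_t)\,\mathrm{d}\xi$, so a flow $u_t=K[u]$ on the SP side induces the evolution $Q_\tau=\Theta(K)$ on the SG side, where the push-forward on symmetries is
\[
\Theta(K)=\cos^2 Q\,\partial_x K-Q_y\int^x\sin Q\,\partial_\xi K\,\mathrm{d}\xi .
\]
Evaluating $\Theta$ on $K=u_x$ and using $\int^x\sin Q\,u_{\xi\xi}\,\mathrm{d}\xi=\sqrt{1+u_x^2}+\mathrm{const}$, with the constant fixed by the decay of $u_x$ at the lower endpoint, gives $\Theta(u_x)=Q_y$. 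This settles the base case and already shows that the constants produced by the nonlocal operators must be tracked carefully.

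The core of the proof is the recursion relation $\Theta\circ\mathcal{R}=\bar{\mathcal{R}}^{-1}\circ\Theta$. The most transparent way to obtain it is to show that \eqref{liouville} \emph{swaps} the two Hamiltonian pairs, in the form of the intertwining identities
\[
\Theta\circ\mathcal{J}=\bar{\mathcal{K}}\circ\Xi,\qquad \Theta\circ\mathcal{K}=\bar{\mathcal{J}}\circ\Xi,
\]
where $\Xi$ is the companion map on variational derivatives induced by \eqref{liouville}. Eliminating $\Xi$ then yields $\Theta\circ\mathcal{R}=\Theta\circ\mathcal{K}\circ\mathcal{J}^{-1}=\bar{\mathcal{J}}\circ\bar{\mathcal{K}}^{-1}\circ\Theta=\bar{\mathcal{R}}^{-1}\circ\Theta$, and combining this with $\Theta(u_x)=Q_y$ gives, by induction, $\Theta(K_{n+1})=\Theta(\mathcal{R}^{n}u_x)=\bar{\mathcal{R}}^{-n}Q_y=\bar{K}_{1-n}$ for every $n\in\mathbb{Z}$, which is exactly the claim. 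The orders are consistent with the swap: $\mathcal{J}=\partial_x$ and $\bar{\mathcal{K}}=\partial_y+Q_y\partial_y^{-1}Q_y$ are both first order, while $\mathcal{K}=\partial_x^{-1}+u_x\partial_x^{-1}u_x$ and $\bar{\mathcal{J}}=\partial_y^{-1}$ are both of order $-1$, so $\mathcal{R}$ and $\bar{\mathcal{R}}^{-1}$ are both of order $-2$.

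The main obstacle is verifying the two intertwining identities, which are operator equations containing the nonlocal factors $\partial_x^{-1}$ and $\partial_y^{-1}$ against the $t$-dependent change of variable $y=y(t,x)$. Rewriting $\bar{\mathcal{J}}$ and $\bar{\mathcal{K}}$ in the $x$-variable through $\partial_y=\cos Q\,\partial_x$ turns them into conjugates of $\partial_x^{-1}$ and $\partial_x$ by the Jacobian $\sqrt{1+u_x^2}$, and the nonlocal pieces must then be matched against the explicit form of $\Xi$ using $u_x=\tan Q$ and $Q_y=\partial_x(\sin Q)$; as in the base case, the endpoint terms spun off by each $\partial_x^{-1}$ have to cancel consistently at every stage, so the decay hypotheses enter here in an essential way. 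Finally, the spectral-problem correspondence of Section 3.1, under which \eqref{iso-sg} becomes \eqref{iso-sp} with $\lambda=\mu$, provides an independent consistency check that $\mathcal{R}$ and $\bar{\mathcal{R}}^{-1}$ represent the same action on the common eigenfunctions, confirming the direction of the index flip.
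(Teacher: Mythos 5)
Your proposal follows essentially the same route as the paper, just packaged more abstractly. The paper's proof of Theorem \ref{t1.1} (restated as Theorem \ref{t3.1}) rests on exactly the two ingredients you isolate: the seed correspondence (your base case $\Theta(u_x)=Q_y$ is the paper's $n=0$ computation showing $Q_\tau=Q_y$) and a relation between the recursion operators, which the paper states as the conjugation identity $\bar{\mathcal{R}}^n=\partial_x^{-1}\LC\mathcal{R}^{-1}\RC^n\partial_x$ (Lemma \ref{l3.1}) and proves by an explicit test-function computation using $\partial_x^{-1}=\partial_y^{-1}\cos Q$, $u_x=\tan Q$ and integration by parts. Your push-forward $\Theta(K)=\cos^2Q\,\partial_xK-Q_y\int^x\sin Q\,\partial_\xi K\,\mathrm{d}\xi$ is precisely what the paper computes in \eqref{thm3.1-2}--\eqref{thm3.1-3}, your intertwining relation $\Theta\circ\mathcal{R}=\bar{\mathcal{R}}^{-1}\circ\Theta$ is equivalent to Lemma \ref{l3.1} combined with that chain-rule formula, and your ``swap of Hamiltonian pairs'' is the identity $\bar{\mathcal{K}}\bar{\mathcal{J}}^{-1}=\mathcal{K}^{-1}\mathcal{J}$ that the paper records as \eqref{l4.1-2}. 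So the architecture is right and the index flip $n+1\mapsto 1-n$ comes out for the correct structural reason.

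Two things keep the proposal short of a proof. First, the central intertwining identity is asserted rather than verified: the auxiliary map $\Xi$ is never defined, so the two displayed ``swap'' identities carry no content beyond the single relation $\Theta\circ\mathcal{R}=\bar{\mathcal{R}}^{-1}\circ\Theta$ they are meant to imply, and that relation is exactly where all the work of Lemma \ref{l3.1} lives (the cancellation $\cos Q\cdot\rho_y+\sin^2Q\,\rho_y/\cos Q=\rho_x$ after two integrations by parts). You correctly name this as the main obstacle and the method you sketch would succeed, but it is not carried out. Second, the clean induction ``$\Theta(\mathcal{R}^nu_x)=\bar{\mathcal{R}}^{-n}Q_y$ for every $n\in\mathbb{Z}$'' glosses over the fact that $\mathcal{R}^{-1}$ and $\bar{\mathcal{R}}^{-1}$ are nonlocal and determined only modulo their kernels; the hierarchies fix specific representatives, namely $K_0[u]=\partial_x^2\bigl(u_x/\sqrt{1+u_x^2}\bigr)$ and $\bar K_0[Q]=\partial_y^{-1}\sin Q$, and the paper accordingly treats the negative flows separately, writing the $(SG)_{-(n-1)}$ equation in the implicit form $\bar{\mathcal{R}}^{n-1}Q_\tau=\partial_y^{-1}\sin Q$ rather than applying $\bar{\mathcal{R}}^{-1}$ outright. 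Your remark about tracking the constants of integration points at this issue but does not resolve it; to complete the argument you would need to check that the endpoint constants produced by each $\partial_x^{-1}$ and $\partial_y^{-1}$ select precisely these representatives on both sides.
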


We next focus our attention on the Hamiltonian conservation laws admitted by the SP equation and the SG equation. On the one hand, the compatible bi-Hamiltonian structure  \eqref{hie-sp} produces the recursively constructed infinite sequence of Hamiltonian functionals in both the negative and positive directions:
\begin{equation}\label{conslaws}
\ldots, \mathcal{H}_{-2}, \,\mathcal{H}_{-1},\,\mathcal{H}_0,\,\mathcal{H}_1,\,\mathcal{H}_2,\,\ldots,
\end{equation}
which are all conserved densities of the SP equation \eqref{sp1}. On the other hand, the recursive formula \eqref{hie-sg} gives rise to an infinite sequence of Hamiltonian functionals
\begin{equation}\label{cl-sg}
\ldots, \bar{\mathcal{H}}_{-2}, \,\bar{\mathcal{H}}_{-1},\,\bar{\mathcal{H}}_0,\,\bar{\mathcal{H}}_1,\,\bar{\mathcal{H}}_2,\,\ldots,
\end{equation}
conserved under the SG flow \eqref{sg1} \cite{mag,olv}. We will study the correspondence between the conserved quantities in the hierarchies \eqref{conslaws} and \eqref{cl-sg}, and prove that the Liouville transformation \eqref{liouville} not only links the integrable flows in the respective hierarchies but also relates the corresponding Hamiltonian conservation laws admitted by the two integrable equations.  More precisely, we establish the following theorem, illustrating the preceding claim.

\begin{theorem}\label{t1.2}
For any integer $n$, each Hamiltonian conservation law $\bar{\mathcal{H}}_n(Q)$ of the SG equation in \eqref{cl-sg} yields the Hamiltonian conservation law $\mathcal{H}_{-n}(u)$ of the SP equation in \eqref{conslaws}, under the Liouville transformation \eqref{liouville}, according to the following identity
\begin{equation*}
 \mathcal{H}_{-n}(u)=-\bar{ \mathcal{H}}_{n}(Q), \qquad  n\in \mathbb{Z}.
\end{equation*}
\end{theorem}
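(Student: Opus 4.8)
The plan is to reduce the claimed functional identity to an identity between variational derivatives, prove the latter by induction on $n$ using the flow correspondence of Theorem \ref{t1.1}, and then integrate back up, fixing the constant of integration at the trivial solution. Throughout I write $\gamma_n=\delta\mathcal H_n/\delta u$ and $\bar\gamma_n=\delta\bar{\mathcal H}_n/\delta Q$ for the two families of variational derivatives, and I normalise every Hamiltonian in \eqref{conslaws} and \eqref{cl-sg} to vanish on the trivial solution $u\equiv 0\leftrightarrow Q\equiv 0$ (the explicit densities in \eqref{sp-h0} and \eqref{sg-h0h1} already have this property, a normalisation inherited through the recursion). The base case is immediate: substituting \eqref{liouville}, for which $\cos Q=1/\sqrt{1+u_x^2}$ and $\mathrm dy=\sqrt{1+u_x^2}\,\mathrm dx$, into $\bar{\mathcal H}_0(Q)=\int(1-\cos Q)\,\mathrm dy$ gives $\int(\sqrt{1+u_x^2}-1)\,\mathrm dx=-\mathcal H_0(u)$, so $\mathcal H_0(u)=-\bar{\mathcal H}_0(Q)$, which anchors the induction at $n=0$.

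Next I would establish the transformation rule for variational derivatives. Let $D$ denote the Fréchet derivative of the map $u\mapsto Q$ determined by \eqref{liouville}. Because the independent variable $y$ itself depends on $u$, a variation $\delta u$ produces $\delta Q=\cos^2 Q\,\partial_x\delta u-Q_y\,\partial_x^{-1}(\sin Q\,\partial_x\delta u)$, the second term recording the reparametrisation of the $y$-axis. Pairing $\int(\delta\bar{\mathcal H}/\delta Q)\,\delta Q\,\mathrm dy$ against a test variation and converting $\mathrm dy=\sqrt{1+u_x^2}\,\mathrm dx$ then shows that, for any SG functional,
\[
\frac{\delta}{\delta u}\big[\bar{\mathcal H}(Q(u))\big]=\mathcal A\,\frac{\delta\bar{\mathcal H}}{\delta Q},\qquad \mathcal A:=D^{\dagger},
\]
where $D^{\dagger}$ is the formal adjoint of $D$ taken with respect to the two measures $\mathrm dx$ and $\mathrm dy$ (equivalently $\mathcal A g=D^{*}(\sqrt{1+u_x^2}\,g)$ with $D^{*}$ the $\mathrm dx$-adjoint). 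As a consistency check one computes $\mathcal A(\sin Q)=-(\sin Q)_x=-\,\delta\mathcal H_0/\delta u$, which reproduces the base case in the gradient form $\gamma_0=-\mathcal A\bar\gamma_0$.

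The inductive step rests on dualising Theorem \ref{t1.1}. At the level of vector fields the correspondence $K_{n+1}\mapsto\bar K_{1-n}$ reads $D\,K_{p}=\bar K_{2-p}$ (the induced evolution of $Q$ along an SP flow is exactly $D$ applied to that flow), and comparing consecutive members yields the recursion intertwining $D\mathcal R=\bar{\mathcal R}^{-1}D$ for $\mathcal R=\mathcal K\mathcal J^{-1}$, $\bar{\mathcal R}=\bar{\mathcal K}\bar{\mathcal J}^{-1}$. Taking adjoints and using the skew-adjointness of the Hamiltonian operators (so that the gradient recursions satisfy $\mathcal J^{-1}\mathcal K=\mathcal R^{*}$ and $\bar{\mathcal J}^{-1}\bar{\mathcal K}=\bar{\mathcal R}^{*}$), together with $\mathcal A=D^{\dagger}$, converts $D\mathcal R=\bar{\mathcal R}^{-1}D$ into the gradient intertwining $\mathcal A\,(\bar{\mathcal J}^{-1}\bar{\mathcal K})=(\mathcal J^{-1}\mathcal K)^{-1}\mathcal A$. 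Feeding this and the base case into the two bi-Hamiltonian recursions $\gamma_{-(n+1)}=(\mathcal J^{-1}\mathcal K)^{-1}\gamma_{-n}$ and $\bar\gamma_{n+1}=(\bar{\mathcal J}^{-1}\bar{\mathcal K})\bar\gamma_{n}$ propagates $\gamma_{-n}=-\mathcal A\bar\gamma_n$ to all $n\in\mathbb Z$ in both directions. Finally, since $\gamma_{-n}=-\mathcal A\bar\gamma_n=-\frac{\delta}{\delta u}[\bar{\mathcal H}_n(Q(u))]$, the functional $\mathcal H_{-n}(u)+\bar{\mathcal H}_n(Q(u))$ has vanishing variational derivative, hence is constant; evaluating at the trivial solution forces this constant to be $0$, which gives $\mathcal H_{-n}(u)=-\bar{\mathcal H}_n(Q(u))$.

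The main obstacle is the transformation of the Hamiltonian and variational structure under \eqref{liouville}: because $y$ is a field-dependent change of independent variable (a reciprocal-type transformation), the Fréchet derivative $D$ and its measure-weighted adjoint $\mathcal A$ must be computed with care, and the nonlocal operators $\partial_x^{-1}$, $\partial_y^{-1}$ handled with the correct decay and boundary behaviour so that the intertwining $\mathcal A\,\bar{\mathcal R}^{*}=(\mathcal R^{*})^{-1}\mathcal A$ holds \emph{exactly}, with no stray boundary constants. A secondary point requiring justification is the final step—that a functional with zero variational derivative on the connected space of admissible decaying fields is constant—so that matching gradients together with the normalisation at $u\equiv 0$ genuinely upgrades to the stated equality of functionals.
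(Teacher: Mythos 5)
Your proposal is correct and follows essentially the same route as the paper: the paper likewise reduces the identity to one between variational derivatives (its Lemma \ref{l4.1} is your gradient induction, driven by the adjoint form $\bar{\mathcal K}\bar{\mathcal J}^{-1}=\mathcal K^{-1}\mathcal J$ of the recursion-operator intertwining, and its Lemma \ref{l4.2} is your operator $\mathcal A=D^{\dagger}$, computed explicitly to be $-\bar{\mathcal K}$) and then integrates back up to the functionals. The one obstacle you flag---the stray nonlocal term produced by $\partial_y^{-1}$ in the adjoint computation---is disposed of in the paper by the involutivity relation $\int(\delta\bar{\mathcal H}_n/\delta Q)\,Q_y\,\mathrm dy=0$.
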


\section{The correspondence between the short pulse and sine-Gordon hierarchies}

\subsection{A Liouville transformation between the iso-spectral problems of the SP and SG equations.}

Let us begin with the iso-spectral problems associated to the SP equation \eqref{sp1} and the SG equation \eqref{sg1}. It has been proposed in \cite{ss1} that the zero curvature formulation for the SP equation \eqref{sp1} takes the form
 \begin{equation*}
\mathbf{D}_t\mathbf{M}-\mathbf{D}_x\mathbf{N}+[\mathbf{M},\;\mathbf{N}]=0
\end{equation*}
with
 \begin{equation*}
\mathbf{M}=\begin{pmatrix} 
                      \mu\; &\mu u_x\\
                      \mu u_x\; &-\mu
                    \end{pmatrix} 
               \qquad \mathrm{and} \qquad 
 \mathbf{N}=\begin{pmatrix} 
                      \frac{\mu}{2}u^2+\frac{1}{4\mu}\; & \frac{\mu}{2}u^2u_x-\frac{1}{2}u\\
                      \frac{\mu}{2}u^2u_x+\frac{1}{2}u\; &-\frac{\mu}{2}u^2-\frac{1}{4\mu}
                     \end{pmatrix},
\end{equation*}
which gives rise to the linear iso-spectral problem associated to the SP equation \eqref{sp1}, namely
\begin{equation}\label{isosp}
\mathbf{\Psi}_x =\mathbf{M} \mathbf{\Psi},\qquad  
\mathbf{\Psi}=\begin{pmatrix}
                          \psi_1 \\ 
                          \psi_2
                       \end{pmatrix}.
\end{equation}
On the other hand, the SG equation \eqref{sg1} arises from the compatibility condition $\partial_\tau(\mathbf{\Phi}_y)=\partial_y (\mathbf{\Phi}_\tau)$, where 
\begin{equation}\label{isosg}
\mathbf{\Phi}_y=\begin{pmatrix} 
                             \lambda \cos Q \; &\lambda \sin Q\\
                             \lambda \sin Q\; &-\lambda \cos Q
                           \end{pmatrix} \mathbf{\Phi},\qquad
\mathbf{\Phi}=\begin{pmatrix}
                         \phi_1 \\ 
                         \phi_2
                        \end{pmatrix},
\end{equation}
and
\begin{equation*}
\mathbf{\Phi}_\tau=\begin{pmatrix} 
                                 \frac{1}{4 \lambda} &-\frac{1}{2}Q_\tau\\
                                 \frac{1}{2}Q_\tau &-\frac{1}{4 \lambda}
                               \end{pmatrix}\mathbf{\Phi}.
\end{equation*}

One can verify that the following Liouville transformation
\begin{equation*}
\mathbf{\Phi}=\mathbf{\Psi},\qquad y=\int^x \sqrt{1+u_z^2}\,\mathrm{d}z
\end{equation*}
will convert the linear spectral problem (\ref{isosp}) into the  linear spectral problem (\ref{isosg}), with
\begin{equation*}
Q=\arccos\frac{1}{\sqrt{1+u_x^2}} \qquad \mathrm{and} \qquad \lambda=\mu .
\end{equation*}
This allows us introduce the following coordinate transformations, say
\begin{equation}\label{recixy}
 y=\int^x \sqrt{1+u_z^2(t, z)}\,\mathrm{d}z, \qquad \tau=t
\end{equation}
and
\begin{equation}\label{reciqu}
Q(\tau, y)=\arccos\frac{1}{\sqrt{1+u_x^2(t, x)}}.
\end{equation}
Transformations \eqref{recixy} and \eqref{reciqu} are, in fact, equivalent to the chain of transformations \eqref{equi} and the Hodograph-type transformation introduced in \cite{mats1} and serves to connect the SP equation \eqref{sp1} with the SG equation \eqref{sg1}. Note that the first equation in transformation \eqref{recixy}  has the form of the reciprocal transformation, which exchanges the roles of the dependent and independent variables. In the following subsection, we will investigate how the Liouville transformations \eqref{recixy} and \eqref{reciqu} relating the spectral problems \eqref{isosp} and  \eqref{isosg} affect the underlying correspondence between the flows in the SP and SG hierarchies.

\subsection{The correspondence between the SP and SG hierarchies.}

We now focus our attention on the SP and SG hierarchies. First of all, if we apply the recursion operator 
\begin{equation}\label{sp-re}
\mathcal{R}=\mathcal{K}\,\mathcal{J}^{-1}=\left(\partial_x^{-1}+u_x\,\partial_x^{-1}\,u_x\right)\partial_x^{-1}
\end{equation} 
of the SP equation \eqref{sp1} successively to the translational symmetry with characteristic $K_1[u]=u_x$, we can obtain the positive flows
\begin{equation}\label{spn+1}
u_t=K_{n+1}[u]=\mathcal{R}^n\,u_x,\qquad n=0, 1, \ldots,
\end{equation} 
in the SP hierarchy \eqref{hie-sp}. On the other hand, starting with 
\begin{equation*}
u_t=K_{0}[u]=\partial_x^2\frac{u_x}{\sqrt{1+u_x^2}},
\end{equation*} 
the $(n-1)$th member in the negative flows of the SP hierarchy \eqref{hie-sp} can be written as
\begin{equation}\label{sp-n}
u_t=K_{-(n-1)}[u]=\mathcal{R}^{-(n-1)}\,\partial_x^2\frac{u_x}{\sqrt{1+u_x^2}},\quad n=1, 2,\ldots.
\end{equation}

Similarly, for the SG hierarchy \eqref{hie-sg}, the positive flows take the following form
\begin{equation}\label{sgn+1}
Q_\tau=G_{n+1}[Q]=\bar{\mathcal{R}}^n\,Q_y,\qquad n=0, 1,\ldots,
\end{equation}
where the corresponding recursion operator is 
\begin{equation}\label{sg-re}
\bar{\mathcal{R}}=\bar{\mathcal{K}}\,\bar{\mathcal{J}}^{-1}=\partial_y^2+Q_y\partial_y^{-1}Q_y\partial_y.
\end{equation}
While, in the negative direction, since $\delta\bar{\mathcal{H}}_0/\delta Q=\sin Q$,  the $(n-1)$th negative flow $Q_{\tau}=\bar{K}_{n-1}[Q]$, $n=1, 2, \ldots$, can be expressed by
\begin{equation}\label{sg-n}
\bar{\mathcal{J}}^{-1}\bar{\mathcal{R}}^{(n-1)}\,Q_\tau=\sin Q, \qquad n=1, 2,\ldots.
\end{equation}

Hereafter, as a matter of convenience, for each positive integer $n$, we write the $n$th equation in the positive direction of the SP and SG hierarchies by 
$(SP)_n$ and $(SG)_n$, respectively. While for each non-negative integer $n$, the $n$th negative flow in the SP and SG hierarchies are denoted by $(SG)_{-n}$ and $(SG)_{-n}$, respectively. With these notations, we now restate Theorem \ref{t1.1} and present the explicit description of the correspondence between the two hierarchies.

\begin{theorem}\label{t3.1}
Under the transformations \eqref{recixy} and \eqref{reciqu}, for each $l\in \mathbb{Z}$, the $(SP)_{l+1}$ equation is related to the $(SG)_{1-l}$ equation. More precisely, 

{\bf(i).}  for each integer $n \geq 0$, $u$ is a solution of  the $(SP)_{n+1}$ equation \eqref{spn+1} if and only if $Q$ satisfies $Q_\tau=Q_y$ for $n=0$ or the $(SG)_{-(n-1)}$ equation \eqref{sg-n} for $n \geq 1$;

{\bf(ii).} for each integer $n\geq 1$, $u$ solves the $(SP)_{-(n-1)}$ equation \eqref{sp-n} if and only if $Q$ satisfies the $(SG)_{n+1}$ equation
\eqref{sgn+1}. 
\end{theorem}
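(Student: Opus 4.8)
The plan is to read \eqref{recixy}--\eqref{reciqu} as a reciprocal (Liouville) transformation and to show that it intertwines the SP recursion operator $\mathcal{R}$ of \eqref{sp-re} with the \emph{inverse} $\bar{\mathcal{R}}^{-1}$ of the SG recursion operator of \eqref{sg-re}; the full correspondence $(SP)_{l+1}\leftrightarrow(SG)_{1-l}$ then follows by propagating a single seed identity along the hierarchies. First I would fix the differential dictionary. With $\rho=\sqrt{1+u_x^2}$, \eqref{recixy} gives $\partial_x=\rho\,\partial_y$, hence $\partial_x^{-1}=\partial_y^{-1}\rho^{-1}$, and \eqref{reciqu} gives $\cos Q=\rho^{-1}$, $\sin Q=u_x/\rho$, together with the identities $Q_y=u_{xx}/\rho^{3}$ and $\partial_x\sin Q=Q_y$. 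Because $y$ depends on $t$ through $u$, a function carried between the two coordinate systems has its time derivatives related by $\partial_\tau=\partial_t-y_t\rho^{-1}\partial_x$, where $y_t=\int^x\rho^{-1}u_zu_{zt}\,\mathrm{d}z$ is fixed by the flow under consideration. Transporting everything to the $y$ variable, this packages into a single linear operator $\Phi$ carrying the symmetry of an SP flow to the symmetry of the transported $Q$-flow; a direct computation yields $\Phi=\cos Q\,\partial_y-Q_y\,\partial_y^{-1}\sin Q\,\partial_y$.

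The base case is $\Phi(u_x)=Q_y$, so the translational seed $(SP)_1\colon u_t=u_x$ is carried to $(SG)_1\colon Q_\tau=Q_y$; this is checked directly from the dictionary once the constant of integration in $y_t=\int^x\partial_z\rho\,\mathrm{d}z$ is pinned down by the decay $\rho\to1$ at spatial infinity. The crux of the argument is the operator identity
\begin{equation*}
\Phi\,\mathcal{R}=\bar{\mathcal{R}}^{-1}\,\Phi .
\end{equation*}
I would organise its proof around the Hamiltonian pairs: transport $(\mathcal{K},\mathcal{J})$ of \eqref{sp-bihaop} to the $y$ variable through the dictionary and show that, modulo the reciprocal change of variables encoded in $\Phi$, the pair $(\mathcal{K},\mathcal{J})$ is \emph{exchanged} with $(\bar{\mathcal{J}},\bar{\mathcal{K}})$ of \eqref{haop-sg}, so that $\mathcal{R}=\mathcal{K}\mathcal{J}^{-1}$ turns into $\bar{\mathcal{J}}\bar{\mathcal{K}}^{-1}=\bar{\mathcal{R}}^{-1}$. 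It is precisely this swap of the roles of the two operators that produces the reversal of the hierarchy index.

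Granting these two facts, an induction finishes the proof. Applying $\Phi\mathcal{R}=\bar{\mathcal{R}}^{-1}\Phi$ repeatedly to $\Phi(u_x)=Q_y$ gives $\Phi(\mathcal{R}^{\,n}u_x)=\bar{\mathcal{R}}^{-n}Q_y$ for every $n\ge0$. Since $\mathcal{R}^{\,n}u_x$ is the right-hand side of $(SP)_{n+1}$ in \eqref{spn+1}, and $\bar{\mathcal{R}}^{-n}Q_y$ is $Q_\tau=Q_y$ when $n=0$ and the flow written implicitly in \eqref{sg-n} when $n\ge1$, this is statement (i). For statement (ii) I read the identity as $\bar{\mathcal{R}}\,\Phi=\Phi\,\mathcal{R}^{-1}$; starting from $(SP)_0\colon u_t=\partial_x^2\sin Q=\mathcal{R}^{-1}u_x$ of \eqref{wki}, the intertwining gives $\Phi(\mathcal{R}^{-(n-1)}\partial_x^2\sin Q)=\Phi(\mathcal{R}^{-n}u_x)=\bar{\mathcal{R}}^{\,n}Q_y$, which matches the negative SP flow \eqref{sp-n} with the positive SG flow \eqref{sgn+1}. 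Because \eqref{recixy}--\eqref{reciqu} is a genuine (invertible) change of variables, $\Phi$ is invertible and every implication reverses, yielding the ``if and only if'' in both parts.

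The main obstacle will be the intertwining identity $\Phi\mathcal{R}=\bar{\mathcal{R}}^{-1}\Phi$. The difficulty stems from the reciprocal nature of \eqref{recixy}: the new spatial variable $y$ is built from the unknown $u$, so $\partial_\tau$ and $\partial_y$ do not commute with the transformation in the naive way and generate nonlocal correction terms --- this is the origin of the $y_t$-term in $\Phi$ and the reason the simple conjugation law for Hamiltonian operators must be applied with care. Verifying that these correction terms, together with the boundary contributions of the nonlocal operators $\partial_x^{-1}$ and $\partial_y^{-1}$ (controlled by the decay hypotheses), combine to turn $\mathcal{R}$ into exactly $\bar{\mathcal{R}}^{-1}$ is a delicate and somewhat lengthy operator computation, and this is where I would concentrate the effort. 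A reassuring consistency check is that the scheme must reproduce the two already-known correspondences --- $(SP)_2$ (the SP equation) with $(SG)_0$ (the SG equation), recovered at $l=1$, and $(SP)_0$ (the WKI equation \eqref{wki}) with $(SG)_2$ (the pmKdV equation), recovered at $l=-1$.
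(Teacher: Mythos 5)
Your proposal is correct and follows essentially the same route as the paper: your transport operator $\Phi=\cos Q\,\partial_y-Q_y\,\partial_y^{-1}\sin Q\,\partial_y$ is precisely $\bar{\mathcal{K}}\cos Q=\bar{\mathcal{R}}\,\partial_x^{-1}=\mathcal{K}^{-1}$, so your intertwining identity $\Phi\,\mathcal{R}=\bar{\mathcal{R}}^{-1}\Phi$ is equivalent to the paper's Lemma~\ref{l3.1} identity $\bar{\mathcal{R}}=\partial_x^{-1}\mathcal{R}^{-1}\partial_x$, which the paper likewise proves by a direct test-function computation and then combines with the same $y_t$ calculation you describe. The only difference is organizational: you derive $Q_\tau=\Phi(K)$ once for a general flow, whereas the paper repeats the transport computation separately for the positive and negative flows.
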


The proof of Theorem \ref{t3.1} relies on the following lemma, which sets up the relationship between the recursion operators $\mathcal{R}$ \eqref{sp-re} and $\bar{\mathcal{R}}$ \eqref{sg-re} admitted by SP and SG hierarchies respectively with $u(t, x)$ and $Q(\tau, y)$ connected by the transformations \eqref{recixy} and \eqref{reciqu}.

\begin{lemma}\label{l3.1}
Let $\mathcal{R}$ be the recursion operator \eqref{sp-re} for the SP hierarchy and $\bar{\mathcal{R}}$ be the recursion operator \eqref{sg-re} for the SG hierarchy. Then, for each integer $n\geq 1$,
\begin{equation}\label{reop-l3.1}
\bar{\mathcal{R}}^n=\partial_x^{-1}\LC \mathcal{R}^{-1}\RC^n \partial_x,
\end{equation}
under the transformations \eqref{recixy} and \eqref{reciqu}.
\end{lemma}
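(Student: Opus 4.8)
The key identity $\bar{\mathcal{R}}^n = \partial_x^{-1}(\mathcal{R}^{-1})^n \partial_x$ is multiplicative in $n$, so the plan is to reduce everything to the base case $n=1$ and then bootstrap. Concretely, I will first establish the single-operator relation
\begin{equation}\label{plan-base}
\bar{\mathcal{R}} = \partial_x^{-1}\,\mathcal{R}^{-1}\,\partial_x
\end{equation}
as operators acting on the appropriate space of functions of $x$, once the variables are tied together by \eqref{recixy}--\eqref{reciqu}. Granting \eqref{plan-base}, the general case follows by a telescoping product: $(\partial_x^{-1}\mathcal{R}^{-1}\partial_x)^n = \partial_x^{-1}(\mathcal{R}^{-1})^n\partial_x$, since the inner $\partial_x\,\partial_x^{-1}$ factors cancel. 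Thus the entire content of the lemma is the base identity \eqref{plan-base}, and a short induction closes it.

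The first block of work is to translate the differential operators under the change of variables $y=\int^x\sqrt{1+u_z^2}\,\mathrm{d}z$. Setting $p:=\sqrt{1+u_x^2}$, the transformation gives $\partial_y = p^{-1}\partial_x$ and $\partial_y^{-1}=\partial_x^{-1}\,p$ (as formal operators), together with the dictionary $\cos Q = p^{-1}$, $\sin Q = u_x\,p^{-1}$ coming from \eqref{reciqu}. I would also need $Q_y$ expressed in $x$-variables; differentiating $\cos Q = p^{-1}$ yields $Q_y$ in terms of $u_x$, $u_{xx}$, which is the raw material for rewriting $\bar{\mathcal{R}}=\partial_y^2+Q_y\partial_y^{-1}Q_y\partial_y$. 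Substituting the dictionary into $\bar{\mathcal{R}}$ converts it into an operator in $x$; the goal is to show this equals $\partial_x^{-1}\mathcal{R}^{-1}\partial_x$, where $\mathcal{R}^{-1}=\mathcal{J}\,\mathcal{K}^{-1}=\partial_x(\partial_x^{-1}+u_x\partial_x^{-1}u_x)^{-1}$.

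The main obstacle is the inverse operator $\mathcal{K}^{-1}=(\partial_x^{-1}+u_x\partial_x^{-1}u_x)^{-1}$, which is not available in closed form; one cannot simply expand it. The clean way around this is to verify \eqref{plan-base} in the equivalent factored form
\begin{equation}\label{plan-factored}
\partial_x\,\bar{\mathcal{R}} = \mathcal{R}^{-1}\,\partial_x \quad\Longleftrightarrow\quad \mathcal{K}\,\partial_x\,\bar{\mathcal{R}} = \mathcal{J}\,\partial_x = \partial_x^2,
\end{equation}
so that only the forward operators $\mathcal{K}=\partial_x^{-1}+u_x\partial_x^{-1}u_x$ and $\bar{\mathcal{R}}$ appear and no inversion of $\mathcal{K}$ is ever needed. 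I would therefore compute $\mathcal{K}\,\partial_x\,\bar{\mathcal{R}}$ directly after substituting the variable dictionary and check it collapses to $\partial_x^2$. This is the computational heart of the argument; it requires carefully tracking how $\partial_x$ commutes past the multiplication operators $p^{-1}$ and $u_x$, and using the Hamiltonian-pair relation that defines $\bar{\mathcal{R}}=\bar{\mathcal{K}}\,\bar{\mathcal{J}}^{-1}$ with $\bar{\mathcal{K}}=\partial_y+Q_y\partial_y^{-1}Q_y$, $\bar{\mathcal{J}}=\partial_y^{-1}$ from \eqref{haop-sg}. I expect the decay conditions (as in the SG computation in Section 2) to justify dropping the boundary terms produced by the $\partial_x^{-1}$ factors, exactly as in the mKdV-to-SG reduction already carried out in the excerpt.
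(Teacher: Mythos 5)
Your overall architecture---reduce to the case $n=1$ and then close by the telescoping induction $\bar{\mathcal{R}}^{k+1}=\partial_x^{-1}(\mathcal{R}^{-1})^{k}\partial_x\cdot\partial_x^{-1}\mathcal{R}^{-1}\partial_x$---is exactly the paper's, as is the idea of never inverting $\mathcal{K}$ but instead verifying a forward-operator identity on test functions. However, the specific ``equivalent factored form'' you propose to check, namely $\mathcal{K}\,\partial_x\,\bar{\mathcal{R}}=\mathcal{J}\,\partial_x=\partial_x^2$, is \emph{not} equivalent to the lemma and is in fact false. Since $\mathcal{R}^{-1}=\mathcal{J}\,\mathcal{K}^{-1}=\partial_x\,\mathcal{K}^{-1}$, the base identity $\partial_x\bar{\mathcal{R}}=\mathcal{R}^{-1}\partial_x$ reads $\partial_x\bar{\mathcal{R}}=\partial_x\,\mathcal{K}^{-1}\partial_x$, i.e.\ $\bar{\mathcal{R}}=\mathcal{K}^{-1}\partial_x$; the correct $\mathcal{K}^{-1}$-free formulation is therefore obtained by left-multiplying by $\mathcal{R}=\mathcal{K}\,\partial_x^{-1}$, giving
\begin{equation*}
\mathcal{K}\,\bar{\mathcal{R}}=\left(\partial_x^{-1}+u_x\partial_x^{-1}u_x\right)\bar{\mathcal{R}}=\partial_x,
\end{equation*}
which is precisely the paper's identity \eqref{l3.1-1}. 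Your version inserts an extra $\partial_x$ between $\mathcal{K}$ and $\bar{\mathcal{R}}$, which amounts to silently commuting $\mathcal{K}$ past $\partial_x$. But
\begin{equation*}
\mathcal{K}\,\partial_x-\partial_x\,\mathcal{K}=-\,u_x\partial_x^{-1}u_{xx}-u_{xx}\partial_x^{-1}u_x\neq 0,
\end{equation*}
so $\mathcal{K}\,\partial_x\,\bar{\mathcal{R}}=\partial_x\,\mathcal{K}\,\bar{\mathcal{R}}+[\mathcal{K},\partial_x]\,\bar{\mathcal{R}}=\partial_x^2+[\mathcal{K},\partial_x]\,\bar{\mathcal{R}}\neq\partial_x^2$. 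The computation you outline as ``the computational heart of the argument'' would therefore not collapse to $\partial_x^2$, and the plan as written does not prove the lemma.

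The repair is small: drop the spurious $\partial_x$ and verify $\mathcal{K}\bar{\mathcal{R}}\rho=\rho_x$ on test functions $\rho$. Using your dictionary $\partial_x=\sqrt{1+u_x^2}\,\partial_y$, $\partial_x^{-1}=\partial_y^{-1}\cos Q$, $\cos Q=(1+u_x^2)^{-1/2}$, $\sin Q=u_x(1+u_x^2)^{-1/2}$, one finds after integration by parts (with decay to kill boundary terms) that $\partial_x^{-1}\bar{\mathcal{R}}\rho=\cos Q\cdot\rho_y+\sin Q\cdot\partial_y^{-1}(Q_y\rho_y)$ while $u_x\partial_x^{-1}u_x\bar{\mathcal{R}}\rho=\tan Q\cdot\bigl(\sin Q\cdot\rho_y-\cos Q\cdot\partial_y^{-1}(Q_y\rho_y)\bigr)$; the nonlocal terms cancel and the sum is $\rho_y/\cos Q=\rho_x$. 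With that substitution your argument coincides with the paper's proof.
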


\begin{proof}
We prove \eqref{reop-l3.1} by induction in $n$. For the case $n=1$, based on the form of the operator $\mathcal{R}$ \eqref{sp-re}, it suffices to prove the following operator identity
\begin{equation}\label{l3.1-1}
\mathcal{R}\,\partial_x\bar{\mathcal{R}}=\LC \partial_x^{-1}+u_x\partial_x^{-1}u_x\RC \bar{\mathcal{R}}=\partial_x.
\end{equation}
In view of the transformations \eqref{recixy} and \eqref{reciqu}, we have several expressions after direct calculation
\begin{equation}\label{reci-1}
\partial_x=\sqrt{1+u_x^2}\,\partial_y, \qquad \partial_x^{-1}=\partial_y^{-1}\sqrt{1+u_x^2}=\partial_y^{-1}\cos Q
\end{equation} 
and
\begin{equation}\label{reci-2}
\cos Q=\frac{1}{\sqrt{1+u_x^2}}, \qquad \sin Q=\frac{u_x}{\sqrt{1+u_x^2}}.
\end{equation}
Hence, for a test function $\rho\in \mathcal{C}_c^{\infty}(\mathbb{R})$, using \eqref{reci-1}, \eqref{reci-2} and integration by parts, we have
\begin{equation}\label{l3.1-2}
\begin{aligned}
\partial_x^{-1}\bar{\mathcal{R}}\rho&=\partial_y^{-1}\LC \cos Q\cdot \bar{\mathcal{R}}\rho\RC=\partial_y^{-1}\LC \cos Q\cdot \rho_{yy}+\cos Q \cdot Q_y\partial_y^{-1}Q_y\rho_y\RC\\
&=\int^y \cos Q \cdot \rho_{\xi\xi}\, \mathrm{d}\xi + \int^y \cos Q \cdot Q_{\xi}\partial_{\xi}^{-1}Q_{\xi}\rho_{\xi}\, \mathrm{d}\xi\\
&=\cos Q\cdot \rho_y+\int^y \sin Q\cdot Q_{\xi}\rho_{\xi}\, \mathrm{d}\xi+\int^y (\sin Q)_{\xi}\partial_{\xi}^{-1}Q_{\xi}\rho_{\xi}\, \mathrm{d}\xi\\
&=\cos Q\cdot \rho_y+\sin Q\cdot \partial_y^{-1}Q_y\rho_y
\end{aligned}
\end{equation}
and
\begin{equation}\label{l3.1-3}
\begin{aligned}
u_x\partial_x^{-1}u_x\,\bar{\mathcal{R}}\rho & =u_x\partial_y^{-1}\frac{u_x}{\sqrt{1+u_x^2}}\bar{\mathcal{R}}\rho=\tan Q\cdot \int^y \sin Q \cdot \LC \rho_{\xi\xi}+Q_{\xi}\partial_{\xi}^{-1}Q_{\xi}\rho_{\xi}\RC\mathrm{d}\xi\\
&=\tan Q\cdot \LC \int^y \sin Q\cdot \rho_{\xi\xi}\,\mathrm{d}\xi + \int^y \sin Q\cdot Q_{\xi}\partial_{\xi}^{-1}Q_{\xi}\rho_{\xi}\,\mathrm{d}\xi\RC\\
&=\tan Q\cdot \LC \sin Q\cdot \rho_y-\int^y \cos Q\cdot Q_{\xi}\rho_{\xi}\, \mathrm{d}\xi-\int^y (\cos Q)_{\xi}\partial_{\xi}^{-1}Q_{\xi}\rho_{\xi}\, \mathrm{d}\xi\RC\\
&=\tan Q\cdot \LC \sin Q\cdot \rho_y - \cos Q \cdot \int^y Q_{\xi}\rho_{\xi}\,\mathrm{d}\xi\RC.
\end{aligned}
\end{equation}
Combining \eqref{l3.1-2} with \eqref{l3.1-3} gives rise to
\begin{equation*}
\LC \partial_x^{-1}+u_x\partial_x^{-1}u_x\RC \bar{\mathcal{R}}\rho=\cos Q\cdot \rho_y+\frac{\sin^2 Q}{\cos Q}\cdot \rho_y=\frac{1}{\cos Q}\cdot \rho_y=\rho_x
\end{equation*}
and verifies the identity \eqref{l3.1-1}.

Finally, for the general case, we assume  \eqref{reop-l3.1} holds for $n=k$, in other words,
\begin{equation*}
\bar{\mathcal{R}}^k=\partial_x^{-1}\LC \mathcal{R}^{-1}\RC^k \partial_x.
\end{equation*}
Then, for $n=k+1$, the result of $n=1$ readily leads to 
\begin{eqnarray*}
\LC \bar{\mathcal{R}}\RC^{k+1}=\partial_x^{-1}\LC \mathcal{R}^{-1}\RC^k \partial_x \bar{\mathcal{R}}=\partial_x^{-1}\LC \mathcal{R}^{-1}\RC^k \partial_x \partial_x^{-1}\,\mathcal{R}^{-1}\,\partial_x=\partial_x^{-1}\, \LC \mathcal{R}^{-1}\RC^{k+1}\,\partial_x,
\end{eqnarray*}
illustrating that \eqref{reop-l3.1} holds for each $n\geq1$. Therefore, the lemma is proved.
\end{proof}

\begin{proof} [\bf{Proof of Theorem \ref{t3.1}}]

{\bf(i).} We begin with the $(SP)_{n+1}$ equation \eqref{spn+1} for $n\geq 1$.  Note that the equation \eqref{spn+1} can be written as 
\begin{eqnarray}\label{thm3.1-1}
u_t=\left(\partial_x^{-1}+u_x\,\partial_x^{-1}\,u_x\right)\partial_x^{-1}\mathcal{R}^{n-1}\,u_x, \quad n=1, 2, \ldots.
\end{eqnarray}
Suppose that $u=u(t, x)$ is the solution of equation \eqref{thm3.1-1}. We first calculate the $t$-derivative of the new variable $y$ defined in \eqref{recixy} for the solution $u(t, x)$. More precisely, using \eqref{thm3.1-1}, we have
\begin{equation}\label{thm3.1-2}
\begin{aligned}
y_t & =\int^x \frac{u_zu_{zt}}{\sqrt{1+u_z^2}}\,\mathrm{d}z=\int^x \frac{u_z}{\sqrt{1+u_z^2}}\,\partial_z\LC \partial_z^{-1}+u_z\partial_z^{-1}u_z\RC\partial_z^{-1}\mathcal{R}^{n-1}u_z\, \mathrm{d}z\\ 
& =\int^x \frac{u_z}{\sqrt{1+u_z^2}}\,\LC 1+u_z^2+u_{zz}\partial_z^{-1}u_z\RC \partial_z^{-1}\mathcal{R}^{n-1}u_z\, \mathrm{d}z\\
& =\int^x \LC u_z\sqrt{1+u_z^2}+\frac{u_zu_{zz}}{\sqrt{1+u_z^2}}\,\partial_z^{-1}u_z\RC \partial_z^{-1}\mathcal{R}^{n-1}u_z\,\mathrm{d}z\\
& =\int^x \LC \sqrt{1+u_z^2}\,\partial_z^{-1}u_z\partial_z^{-1}\mathcal{R}^{n-1}u_z\RC_z \mathrm{d}z=\sqrt{1+u_x^2}\,\partial_x^{-1}u_x\partial_x^{-1}\mathcal{R}^{n-1}u_x.
\end{aligned}
\end{equation}

On the other hand, by the transformation \eqref{reciqu}, the corresponding new function $Q(\tau, y)$ satisfies
\begin{equation*}
\cos Q(\tau, y)=\frac{1}{\sqrt{1+u_x^2(t, x)}}.
\end{equation*}
Differentiating the above expression with respect to $t$ and using the transformation \eqref{recixy}, we have
\begin{equation}\label{thm3.1-3}
\sin Q\cdot \big( Q_{\tau}+Q_y\,y_t\big)=\frac{u_x}{(1+u_x^2)^{\frac{3}{2}}}\,u_{xt},
\end{equation} 
which together with the relationship \eqref{reci-1}, \eqref{reci-2}, \eqref{thm3.1-2} and the equation \eqref{thm3.1-1} gives rise to
\begin{equation*}
\begin{aligned}
\sin Q\cdot & \big( Q_{\tau}+Q_y\sqrt{1+u_x^2}\,\partial_x^{-1}u_x\partial_x^{-1}\mathcal{R}^{n-1}\,u_x \big)\\
&=\frac{u_x}{(1+u_x^2)^{\frac{3}{2}}}\,\partial_x^{-1}\mathcal{R}^{n-1}u_x+\frac{u_x}{(1+u_x^2)^{\frac{3}{2}}}\,\big( u_x^2+u_{xx}\partial_x^{-1}u_x\big)\partial_x^{-1}\mathcal{R}^{n-1}u_x\\
&=\frac{u_x}{\sqrt{1+u_x^2}}\,\partial_x^{-1}\mathcal{R}^{n-1}u_x-\LC \frac{1}{\sqrt{1+u_x^2}}\RC_x\partial_x^{-1}u_x\partial_x^{-1}\mathcal{R}^{n-1}u_x\\
&=\sin Q\cdot \partial_x^{-1}\mathcal{R}^{n-1}\sqrt{1+u_x^2}\sin Q-(\cos Q)_y\sqrt{1+u_x^2}\,\partial_x^{-1}u_x\partial_x^{-1}\mathcal{R}^{n-1}u_x.
\end{aligned}
\end{equation*}
Hence, we derive for $Q(\tau, y)$ the following equation
\begin{equation*}
Q_{\tau}=\partial_x^{-1}\mathcal{R}^{n-1}\sqrt{1+u_x^2}\,\sin Q
\end{equation*}
and then
\begin{equation*}
\partial_x^{-1}\LC \mathcal{R}^{-1}\RC^{n-1}\partial_x\,Q_{\tau}=\partial_x^{-1}\sqrt{1+u_x^2}\,\sin Q.
\end{equation*}
Thanks to Lemma \ref{l3.1}, we deduce by \eqref{reci-1} that $Q(\tau, y)$ satisfies
\begin{equation*}
\bar{\mathcal{R}}^{n-1}\,Q_{\tau}=\partial_y^{-1}\,\sin Q,
\end{equation*}
which is exactly the $(SG)_{-(n-1)}$ equation \eqref{sg-n} for $n=1, 2, \ldots$.

For the remaining case $n=0$, plugging $u_t=K_1[u]=u_x$ into \eqref{thm3.1-2} and \eqref{thm3.1-3} yields
\begin{equation*}
\sin Q\cdot \LC Q_{\tau}+Q_y\int^x \frac{u_zu_{zz}}{\sqrt{1+u_z^2}}\,\mathrm{d}z\RC=\frac{u_xu_{xx}}{(1+u_x^2)^{\frac{3}{2}}}.
\end{equation*}
Combining integration by parts and the sufficiently fast decay property of $u(t, x)$ as $|x|\to +\infty$, using the relationship \eqref{reci-1} and \eqref{reci-2}, we derive from the above identity 
\begin{equation*}
\sin Q\cdot \LC Q_{\tau}+Q_y\int^x \partial_z\sqrt{1+u_z^2}\,\mathrm{d}z\RC=-\LC \frac{1}{\sqrt{1+u_x^2}}\RC_x
\end{equation*}
and then
\begin{equation*}
\sin Q\cdot \LC Q_{\tau}+Q_y\sqrt{1+u_x^2}-Q_y\RC=-(\cos Q)_y\sqrt{1+u_x^2}=\sin Q\cdot Q_y \sqrt{1+u_x^2}.
\end{equation*}
We obtain that $Q(\tau, y)$ satisfies the $(SG)_1$ equation $Q_{\tau}=Q_y$.

Conversely, if $Q(\tau, y)$ is a solution of the (SG)$_{-(n-1)}$ equation for integers $n\geq 0$, since the transformations  \eqref{recixy} and  \eqref{reciqu} are the bijections, tracing the previous steps backwards suffices to verify that the reverse argument is also true. Part (i) is thereby proved.

{\bf(ii).}  Now, we investigate the $(SP)_{-(n-1)}$ equation \eqref{sp-n} for $n\geq 1$. Suppose that $u=u(t, x)$ is the solution of \eqref{sp-n}. The $t$-derivative of the corresponding new variable $y$ defined by \eqref{recixy} satisfies
\begin{equation}\label{thm3.1-4}
\begin{aligned}
y_t & =\int^x \frac{u_zu_{zt}}{\sqrt{1+u_z^2}}\,\mathrm{d}z=\int^x \frac{u_z}{\sqrt{1+u_z^2}}\,\partial_z\,\mathcal{R}^{-(n-1)}\partial_z^2\,\frac{u_z}{\sqrt{1+u_z^2}}\, \mathrm{d}z\\ 
& =\int^x \sin Q\cdot \partial_z^2\LC \partial_z^{-1}\mathcal{R}^{-(n-1)}\partial_z\RC\partial_z\sin Q\,\mathrm{d}z=\int^x \sin Q\cdot \partial_z^2\,\bar{\mathcal{R}}^{n-1}\partial_z\sin Q\,\mathrm{d}z\\
& =\int^y \sin Q\cdot \partial_{\xi}\,\frac{1}{\cos Q}\partial_{\xi}\,\bar{\mathcal{R}}^{n-1}Q_{\xi}\,\mathrm{d}\xi=\tan Q\cdot \partial_y\bar{\mathcal{R}}^{(n-1)}Q_y - \partial_y^{-1}Q_y \partial_y\bar{\mathcal{R}}^{(n-1)}Q_y,
\end{aligned}
\end{equation}
where the identites \eqref{reci-1}, \eqref{reci-2} and integration by parts are used. 

Then, for the corresponding new function $Q(\tau, y)$ that is related with the solution $u(t, x)$ through \eqref{reciqu}, differentiating the relationship
\begin{equation*}
\cos Q(\tau, y)=\frac{1}{\sqrt{1+u_x^2(t, x)}}
\end{equation*}
with respect to $t$ leads to
\begin{equation*}
\sin Q\cdot \LC Q_{\tau}+Q_y\,y_t\RC=\frac{u_x}{\big( 1+u_x^2\big)^{\frac{3}{2}}}\,u_{xt}.
\end{equation*}
Using \eqref{reci-1}, \eqref{reci-2} and \eqref{thm3.1-4}, together with the operator identity \eqref{reop-l3.1}, we have
\begin{equation*}
\begin{aligned}
\sin Q\cdot \big( &Q_{\tau}+\tan Q\cdot Q_y\partial_y\bar{\mathcal{R}}^{n-1}Q_y - Q_y\partial_y^{-1}Q_y \partial_y\bar{\mathcal{R}}^{(n-1)}Q_y\big)\\
& =\frac{u_x}{\big( 1+u_x^2\big)^{\frac{3}{2}}}\,\partial_x\mathcal{R}^{-(n-1)}\partial_x^2\frac{u_x}{\sqrt{1+u_x^2}}=\sin Q\cdot \frac{1}{1+u_x^2}\partial_x^2\LC \partial_x^{-1}\mathcal{R}^{-(n-1)}\partial_x\RC\partial_x\sin Q\\
& =\sin Q\cdot \cos Q\cdot \partial_y\frac{1}{\cos Q}\partial_y\bar{\mathcal{R}}^{n-1}Q_y.
\end{aligned}
\end{equation*}
Hence,
\begin{equation*}
Q_{\tau}+\tan Q\cdot Q_y\partial_y\bar{\mathcal{R}}^{n-1}Q_y - Q_y\partial_y^{-1}Q_y \partial_y\bar{\mathcal{R}}^{n-1}Q_y=\partial_y^2\bar{\mathcal{R}}^{n-1}Q_y+\cos Q \LC\frac{1}{\cos Q}\RC_y\partial_y\bar{\mathcal{R}}^{n-1}Q_y,
\end{equation*}
which implies
\begin{equation*}
Q_{\tau}=\LC \partial_y^2+Q_y\partial_y^{-1}Q_y \partial_y\RC\bar{\mathcal{R}}^{n-1}Q_y=\bar{\mathcal{R}}^nQ_y
\end{equation*}
and verifies that the corresponding $Q(\tau, y)$ satisfies the $(SG)_n$ equation \eqref{sgn+1} for each integer $n\geq 1$. The converse resluts follow from the fact that \eqref{recixy} and \eqref{reciqu} are the bijections. We thus complete the proof of Theorem \ref{t3.1} for all $l\in \mathbb{Z}$.
\end{proof}

\section{The correspondence between the Hamiltonian conservation laws of the short pulse and sine-Gordon equations}

According to the Magri's scheme, one can recursively construct the infinite hierarchy of Hamiltonian conservation laws for the  bi-Hamiltonian integrable systems. In particular, for the SP equation \eqref{sp1}, the corresponding recursive formula
\begin{equation}\label{spclhie}
\mathcal{K}\,\frac{\delta \mathcal{H}_{n-1}}{\delta u}=\mathcal{J}\,\frac{\delta \mathcal{H}_n}{\delta u},\qquad n\in  \mathbb{Z},
\end{equation}
formally provide an infinite collection of the Hamiltonian conservation laws, where $\mathcal{K}$ and $\mathcal{J}$ are the two compatible Hamiltonian operators \eqref{sp-bihaop} admitted by the SP equation. While, for the SG equation \eqref{sg1}, we determine the involved Hamiltonian conservation laws $\bar{\mathcal{H}}_{n}$ by
\begin{equation}\label{sgclhie}
\bar{\mathcal{K}}\, \frac{\delta \bar{\mathcal{H}}_{n-1}}{\delta Q}=\bar{\mathcal{J}}\, \frac{\delta \bar{\mathcal{H}}_n}{\delta Q},\qquad n\in \mathbb{Z},
\end{equation}
using the Hamiltonian pair $\bar{\mathcal{K}}$ and $\bar{\mathcal{J}}$ defined in \eqref{haop-sg} .

In this section, we establish the correspondence between the  two hierarchies of Hamiltonian conservation laws $\{\mathcal{H}_n\}$ and $\{\bar{\mathcal{H}}_n\}$ subject to the transformations \eqref{recixy} and \eqref{reciqu} and prove Theorem \ref{t1.2}. Let us begin with the following two lemmas. 

\begin{lemma}  \label{l4.1}
Let $\{\mathcal{H}_n\}$ and $\{\bar{\mathcal{H}}_n\}$  be the hierarchies of Hamiltonian functionals determined by the recursive formulae \eqref{spclhie} and \eqref{sgclhie}, respectively. Then their corresponding variational derivatives satisfy the following relationship
\begin{equation}\label{vd}
\frac{\delta \mathcal{H}_{-n}}{\delta u}=\partial_y^{-1}\,\frac{\delta \bar{\mathcal{H}}_{n+1}}{\delta Q},\qquad n\in \mathbb{Z},
\end{equation}
under the transformations \eqref{recixy} and \eqref{reciqu}.
\end{lemma}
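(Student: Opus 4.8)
The plan is to prove \eqref{vd} by induction on $n\in\mathbb{Z}$, using as the crucial bridge the operator identity already extracted in the proof of Lemma \ref{l3.1}. First I would recast the two recursive formulae \eqref{spclhie} and \eqref{sgclhie} as recursions directly on the variational derivatives. Writing $\gamma_n:=\delta\mathcal{H}_n/\delta u$ and $\bar\gamma_n:=\delta\bar{\mathcal{H}}_n/\delta Q$, and using $\mathcal{J}=\partial_x$ and $\bar{\mathcal{J}}=\partial_y^{-1}$ from \eqref{sp-bihaop} and \eqref{haop-sg}, these formulae become $\partial_x\gamma_n=\mathcal{K}\gamma_{n-1}$ and $\partial_y^{-1}\bar\gamma_n=\bar{\mathcal{K}}\bar\gamma_{n-1}$; equivalently $\gamma_{n-1}=\mathcal{K}^{-1}\partial_x\gamma_n$ and $\partial_y^{-1}\bar\gamma_n=\bar{\mathcal{K}}\bar\gamma_{n-1}$. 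Since both recursions are invertible, it suffices to verify \eqref{vd} at a single index and then propagate it in both directions.

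For the base case $n=0$, I would appeal to the explicit data already recorded in the paper. From \eqref{sp-h0} one has $\gamma_0=\LC u_x/\sqrt{1+u_x^2}\RC_x$, while from \eqref{sg-h0h1} the functional $\bar{\mathcal{H}}_1=-\frac12\int Q_y^2\,\mathrm{d}y$ yields $\bar\gamma_1=Q_{yy}$, so that $\partial_y^{-1}\bar\gamma_1=Q_y$. On the other hand, using \eqref{reci-2} to write $u_x/\sqrt{1+u_x^2}=\sin Q$ and \eqref{reci-1} to write $\partial_x=\frac{1}{\cos Q}\partial_y$, one finds $\gamma_0=\partial_x(\sin Q)=\frac{1}{\cos Q}\partial_y(\sin Q)=Q_y$ as well. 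Hence both sides of \eqref{vd} equal $Q_y$ and the identity holds for $n=0$.

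For the inductive step I would assume $\gamma_{-n}=\partial_y^{-1}\bar\gamma_{n+1}$ and prove $\gamma_{-(n+1)}=\partial_y^{-1}\bar\gamma_{n+2}$. Applying the SP recursion at index $-n$ gives $\gamma_{-(n+1)}=\mathcal{K}^{-1}\partial_x\gamma_{-n}=\mathcal{K}^{-1}\partial_x\partial_y^{-1}\bar\gamma_{n+1}$ by the inductive hypothesis, whereas the SG recursion at index $n+2$ gives $\partial_y^{-1}\bar\gamma_{n+2}=\bar{\mathcal{K}}\bar\gamma_{n+1}$. Thus the step collapses to the single operator identity $\mathcal{K}\bar{\mathcal{K}}=\partial_x\partial_y^{-1}$ under the transformations \eqref{recixy}--\eqref{reciqu}. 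This is precisely \eqref{l3.1-1}: since $\bar{\mathcal{R}}=\bar{\mathcal{K}}\bar{\mathcal{J}}^{-1}=\bar{\mathcal{K}}\partial_y$, the relation $\LC\partial_x^{-1}+u_x\partial_x^{-1}u_x\RC\bar{\mathcal{R}}=\partial_x$ established in the proof of Lemma \ref{l3.1} reads exactly $\mathcal{K}\bar{\mathcal{K}}\partial_y=\partial_x$, i.e. $\mathcal{K}\bar{\mathcal{K}}=\partial_x\partial_y^{-1}$. The decreasing direction $n\to n-1$ follows by reading the same chain of equalities backwards, using invertibility of this identity, so \eqref{vd} holds for every $n\in\mathbb{Z}$.

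The main obstacle I anticipate is bookkeeping rather than computational: the variational derivative $\delta\mathcal{H}/\delta u$ lives over the $x$-variable while $\delta\bar{\mathcal{H}}/\delta Q$ lives over the $y$-variable, so I must be careful that the operator $\partial_y^{-1}$ appearing in \eqref{vd} is understood to act after the change of coordinates, and that the nonlocal operators $\partial_x^{-1}$ and $\partial_y^{-1}$ are pinned down consistently by the decay hypotheses (the same ones already used in \eqref{l3.1-2}--\eqref{l3.1-3}) so that no stray constants of integration spoil the induction. Once the recursions are rewritten purely in terms of $\mathcal{K}$, $\bar{\mathcal{K}}$, $\partial_x$, $\partial_y$, however, the entire argument reduces to the base case together with the operator identity $\mathcal{K}\bar{\mathcal{K}}=\partial_x\partial_y^{-1}$ supplied by Lemma \ref{l3.1}.
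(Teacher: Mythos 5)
Your proposal is correct and follows essentially the same route as the paper: the identical base case $\delta\mathcal{H}_0/\delta u=Q_y=\partial_y^{-1}Q_{yy}$, and an induction in both directions whose step reduces to the operator identity $\mathcal{K}\bar{\mathcal{K}}=\partial_x\partial_y^{-1}$, which is exactly the paper's relation $\bar{\mathcal{K}}\bar{\mathcal{J}}^{-1}=\mathcal{K}^{-1}\mathcal{J}$ extracted from Lemma \ref{l3.1} with $n=1$. The only cosmetic difference is that the paper anchors the downward direction with a second explicit base case at $n=-1$ (via $\delta\bar{\mathcal{H}}_0/\delta Q=\sin Q$ and $\delta\mathcal{H}_1/\delta u=u$), whereas you propagate both directions from the single index $n=0$; both are fine given the invertibility of the recursions.
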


\begin{proof}
To prove this lemma, we use the induction argument. First of all, we consider the case of  $n\geq 0$. Using \eqref{sg-h0h1} and \eqref{sp-h0}, we have
\begin{equation}\label{l4.1-1}
\frac{\delta \mathcal{H}_{0}}{\delta u}=\frac{u_{xx}}{\big( 1+u_x^2\big)^{\frac{3}{2}}} \quad \mathrm{and} \quad \frac{\delta \bar{\mathcal{H}}_1}{\delta Q}=Q_{yy}.
\end{equation}
In view of the relationship \eqref{reci-1} and \eqref{reci-2}, performing the $x$-derivative for
\begin{equation*}
\cos Q(\tau, y)=\frac{1}{\sqrt{1+u_x^2(t, x)}}
\end{equation*}
leads to 
\begin{equation*}
\sin Q\cdot Q_y\,\sqrt{1+u_x^2}=\frac{u_xu_{xx}}{\big( 1+u_x^2\big)^{\frac{3}{2}}},
\end{equation*}
which together with \eqref{l4.1-1} verifies that \eqref{vd} holds for $n=0$. 

Now, suppose, by induction, that \eqref{vd}  holds for $n=k$ with $k\geq 0$, in other words
\begin{equation*}
\frac{\delta \mathcal{H}_{-k}}{\delta u}=\partial_y^{-1}\,\frac{\delta \bar{ \mathcal{H}}_{k+1}}{\delta Q}.
\end{equation*}
Then, for $n=k+1$, in view of the recursive formulae \eqref{spclhie} and \eqref{sgclhie}, 
\begin{eqnarray*}
\frac{\delta \mathcal{H}_{-(k+1)}}{\delta u}=\mathcal{K}^{-1}\mathcal{J}\frac{\delta \mathcal{H}_{-k}}{\delta u}=\mathcal{K}^{-1}\mathcal{J}\partial_y^{-1}\,\frac{\delta \bar{ \mathcal{H}}_{k+1}}{\delta Q}=\bar{\mathcal{K}}\bar{\mathcal{J}}^{-1}\partial_y^{-1}\bar{\mathcal{K}}^{-1}\bar{\mathcal{J}}\frac{\delta \bar{ \mathcal{H}}_{k+2}}{\delta Q}=\partial_y^{-1}\,\frac{\delta \bar{ \mathcal{H}}_{k+2}}{\delta Q},
\end{eqnarray*}
where we have made use of the identity
\begin{equation}\label{l4.1-2}
\bar{\mathcal{K}}\bar{\mathcal{J}}^{-1}=\mathcal{K}^{-1}\mathcal{J},
\end{equation}
which arises from \eqref{reop-l3.1} with $n=1$. This establishes the induction step and thus proves \eqref{vd} for each integer $n\geq 0$.

Next, we deal with the case for the integers $n\leq -1$. From \eqref{sg-h0h1}, \eqref{sp-ha12} and \eqref{reci-2}, we derive
\begin{equation*}
\frac{\delta \bar{\mathcal{H}}_0}{\delta Q}=\sin Q \quad \mathrm{and} \quad \frac{\delta \mathcal{H}_1}{\delta u}=u=\partial_y^{-1}\sin Q.
\end{equation*}
So \eqref{vd} holds for $n=-1$. Assume now that \eqref{vd}  holds for $n=k$ with $k\leq -1$. Then, for $n=k-1$, thanks to \eqref{spclhie} and \eqref{sgclhie} and using \eqref{l4.1-2} again, we arrive at
\begin{eqnarray*}
\frac{\delta \mathcal{H}_{-(k-1)}}{\delta u}=\mathcal{J}^{-1}\mathcal{K}\frac{\delta \mathcal{H}_{-k}}{\delta u}=\bar{\mathcal{J}}\bar{\mathcal{K}}^{-1}\partial_y^{-1}\frac{\delta \bar{ \mathcal{H}}_{k+1}}{\delta Q}=\bar{\mathcal{J}}\bar{\mathcal{K}}^{-1}\partial_y^{-1}\bar{\mathcal{J}}^{-1}\bar{\mathcal{K}}\frac{\delta \bar{ \mathcal{H}}_{k}}{\delta Q}=\partial_y^{-1}\frac{\delta \bar{ \mathcal{H}}_{k}}{\delta Q}.
\end{eqnarray*}
Therefore, a straightforward  induction verifies \eqref{vd} for $n\leq -1$. This completes the proof of the lemma in general.
\end {proof}

The preceding lemma reveals the one-to-one correspondence between the variational derivatives of the corresponding Hamiltonian conservation laws admitted by  the SP and SG equations.  In order to investigate the effect of  the transformations \eqref{recixy} and \eqref{reciqu} on the two hierarchies of the Hamiltonian conservation laws $\{\mathcal{H}_n\}$ and $\{\bar{\mathcal{H}}_n\}$,  we require a formula for the change of the variational derivatives with respect to $u$ and $Q$, respectively.

\begin{lemma} \label{l4.2}
Let $u(t, x)$ and $Q(\tau, y)$ be related by the transformations \eqref{recixy} and \eqref{reciqu}. Assume further that $u(t, x)$ satisfies the equation $u_t=K_{l+1}[u]$ in the SP hierarchy and $Q(\tau, y)$ is the solution of the $(SG)_{-(l-1)}$ equation $Q_{\tau}=\bar{K}_{-(l-1)}[Q]$ with some $l\in \mathbb{Z}$ in terms of Theorem \ref{t3.1}. If $\bar{\mathcal{H}}_n(Q)$ is one Hamiltonian functional for $Q_{\tau}=\bar{K}_{-(l-1)}[Q]$ and $\bar{\mathcal{H}}_n(Q)=\mathcal{H}(u)$ under transformations \eqref{recixy} and \eqref{reciqu}, then
\begin{equation}\label{vdqu}
\frac{\delta \mathcal{H}(u)}{\delta u}=-\bar{\mathcal{K}}\frac{\delta \bar{\mathcal{H}}_n(Q)}{\delta Q},
\end{equation}
$\bar{\mathcal{K}}$ is the Hamiltonian operator given by \eqref{haop-sg}.
\end{lemma}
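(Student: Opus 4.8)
The plan is to prove \eqref{vdqu} by a direct first-variation computation, viewing $\mathcal{H}(u)=\bar{\mathcal{H}}_n(Q)=\int F\,\mathrm{d}y=\int F/\cos Q\,\mathrm{d}x$ as a functional of $u$ obtained by pulling the density $F$ of $\bar{\mathcal{H}}_n$ back through \eqref{recixy} and \eqref{reciqu}, and then reading off $\delta\mathcal{H}/\delta u$ from the coefficient of an arbitrary perturbation. Concretely, I would perturb $u\mapsto u+\epsilon w$ with $w\in\mathcal{C}_c^{\infty}(\mathbb{R})$ at a fixed time and retain only first-order terms. The essential subtlety is that such a perturbation moves \emph{both} the dependent variable $Q$ and the independent variable $y$, so I must carefully separate the variation $\delta Q$ of $Q$ taken at fixed $y$ from the total variation of the pullback $\tilde{Q}(x):=Q(\tau,y(x))$, which satisfies $\cos\tilde{Q}=1/\sqrt{1+u_x^2}$. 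The decay of the solution $u$ and of $Q$ afforded by the flow hypotheses of the lemma is what will allow the boundary terms in the integrations by parts to be discarded.

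First I would record the two induced variations. Differentiating $y=\int^x\sqrt{1+u_z^2}\,\mathrm{d}z$ gives $\partial_x(\delta y)=\sin Q\cdot w_x$ by \eqref{reci-2}, hence $\delta y=\partial_x^{-1}(\sin Q\,w_x)$; differentiating $\cos\tilde{Q}=(1+u_x^2)^{-1/2}$ gives the total pullback variation $\delta\tilde{Q}=w_x/(1+u_x^2)$. Writing $\delta\tilde{Q}=\delta Q+Q_y\,\delta y$ then yields, along $y=y(x)$, the fixed-$y$ variation
\begin{equation*}
\delta Q=\frac{w_x}{1+u_x^2}-Q_y\,\partial_x^{-1}(\sin Q\,w_x).
\end{equation*}
Setting $D:=\delta\bar{\mathcal{H}}_n/\delta Q$, I would next write $\delta\bar{\mathcal{H}}_n=\int D\,\delta Q\,\mathrm{d}y$, convert the measure by $\mathrm{d}y=\mathrm{d}x/\cos Q$ from \eqref{reci-1}, substitute the two pieces of $\delta Q$, and integrate by parts in $x$ so as to present $\delta\bar{\mathcal{H}}_n$ as $\int(\,\cdot\,)\,w\,\mathrm{d}x$. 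Using $1/(1+u_x^2)=\cos^2 Q$ and $Q_y/\cos Q=\tilde{Q}_x$, this identifies
\begin{equation*}
\frac{\delta\mathcal{H}}{\delta u}=-\partial_x\big(D\cos Q\big)-\partial_x\big[\sin Q\cdot\partial_x^{-1}(D\,\tilde{Q}_x)\big],
\end{equation*}
the second term being the contribution of the nonlocal piece $\delta y$.

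Finally I would verify that this coincides with $-\bar{\mathcal{K}}D$. Using $\partial_y=\cos Q\,\partial_x$, $\partial_y^{-1}=\partial_x^{-1}(\cos Q)^{-1}$ and $Q_y=\cos Q\,\tilde{Q}_x$ from \eqref{reci-1} and \eqref{reci-2}, the Hamiltonian operator \eqref{haop-sg} pulls back to $\bar{\mathcal{K}}D=\cos Q\,\partial_x D+\cos Q\,\tilde{Q}_x\,\partial_x^{-1}(\tilde{Q}_x D)$; expanding the two $x$-derivatives in the displayed formula for $\delta\mathcal{H}/\delta u$ and using $(\cos Q)_x=-\sin Q\,\tilde{Q}_x$ and $(\sin Q)_x=\cos Q\,\tilde{Q}_x$ collapses it to exactly $-\bar{\mathcal{K}}D$, which is \eqref{vdqu}. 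I expect the main obstacle to be the bookkeeping around the nonlocal term: correctly exploiting the anti-self-adjointness of $\partial_x^{-1}$ across the two integrations by parts and tracking signs, together with the conceptual care needed to keep $\delta Q$ (at fixed $y$) distinct from $\delta\tilde{Q}$. It is worth noting as an internal check that $\bar{\mathcal{K}}$ emerges here as precisely minus the adjoint of the Fréchet derivative of the transformation \eqref{recixy} and \eqref{reciqu}, which is the structural reason the SG Hamiltonian operator appears in \eqref{vdqu}.
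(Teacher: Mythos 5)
Your overall strategy coincides with the paper's own proof: both compute the first variation of $\bar{\mathcal{H}}_n(F[u+\epsilon\rho])$ by splitting the induced variation of $Q$ into the part taken at fixed $y$ and the part coming from the displacement of the independent variable $y$ itself, then integrate by parts in order to read off $\delta\mathcal{H}/\delta u$ as the coefficient of the test function, and finally recognize the result as $-\bar{\mathcal{K}}\,\delta\bar{\mathcal{H}}_n/\delta Q$. Your formulas for $\delta y$, for the total pullback variation $\delta\tilde{Q}=w_x/(1+u_x^2)$, for the fixed-$y$ variation, and the final algebraic collapse to $-\bar{\mathcal{K}}D$ all match the paper's computation.

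There is, however, one genuine gap, located exactly at the step you dismiss as ``bookkeeping around the nonlocal term.'' With the one-sided antiderivative $\partial_x^{-1}=\int^x$ used throughout the paper (and correspondingly $\partial_y^{-1}$), the operator is \emph{not} skew-adjoint on functions of nonzero mean: for decaying $f,g$ one has $\int f\,\partial_x^{-1}g\,\mathrm{d}x+\int g\,\partial_x^{-1}f\,\mathrm{d}x=\bigl(\int f\,\mathrm{d}x\bigr)\bigl(\int g\,\mathrm{d}x\bigr)$. Hence moving $\partial_x^{-1}$ off $\sin Q\cdot w_x$ and onto $D\tilde{Q}_x$ produces, besides the term you retain, the cross term $\bigl(\int D\,\tilde{Q}_x\,\mathrm{d}x\bigr)\bigl(\int \sin Q\cdot w_x\,\mathrm{d}x\bigr)=\bigl(\int \frac{\delta\bar{\mathcal{H}}_n}{\delta Q}\,Q_y\,\mathrm{d}y\bigr)\bigl(\int \sin Q\cdot w_x\,\mathrm{d}x\bigr)$, which is not a boundary term killed by the decay of $u$ or $Q$ and which would contribute an extra multiple of $(\sin Q)_x$ to $\delta\mathcal{H}/\delta u$. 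The paper disposes of it by proving $\int (\delta\bar{\mathcal{H}}_n/\delta Q)\,Q_y\,\mathrm{d}y=\int (\delta\bar{\mathcal{H}}_n/\delta Q)\,\bar{\mathcal{J}}\,(\delta\bar{\mathcal{H}}_1/\delta Q)\,\mathrm{d}y=0$, i.e.\ by the involution of the Hamiltonian functionals of the SG hierarchy; this is precisely where the hypothesis that $\bar{\mathcal{H}}_n$ is a Hamiltonian functional of a flow in that hierarchy enters, and your argument never invokes it. Supply this vanishing (either via involution as in the paper, or by observing that $\int(\delta\bar{\mathcal{H}}_n/\delta Q)Q_y\,\mathrm{d}y$ is the derivative of the translation-invariant functional $\bar{\mathcal{H}}_n$ along the $y$-translation flow) and your proof closes; without it, the identity \eqref{vdqu} does not follow.
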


\begin{proof}
First of all, in view of \eqref{reciqu}, we denote 
\begin{equation*}
Q(\tau, y)=F[u(t, x)]\equiv \arccos\frac{1}{\sqrt{1+u_x^2(t,x)}}.
\end{equation*}
According to \eqref{recixy}, \eqref{reci-1} and \eqref{reci-2}, we derive with a test function $\rho\in \mathcal{C}^{\infty}_c$ that
\begin{eqnarray*}
\begin{aligned}
\frac{\mathrm{d}}{\mathrm{d}\epsilon}\Big|_{\epsilon=0}y(u+\epsilon \rho) &=\int^x \frac{u_z \rho_z}{\sqrt{1+u_z^2}}\,\mathrm{d}z\\
 &=\frac{u_x}{\sqrt{1+u_x^2}}\rho - \int^x \partial_z\LC\frac{u_z}{\sqrt{1+u_z^2}}\RC \rho\,\mathrm{d}z=\sin Q\cdot \rho - \int^y \cos Q\cdot Q_{\xi}\,\rho\,\mathrm{d}\xi.
\end{aligned}
\end{eqnarray*}
Hence, on the one hand,
\begin{eqnarray*}
\begin{aligned}
\frac{\mathrm{d}}{\mathrm{d}\epsilon}\Big|_{\epsilon=0}F[u+\epsilon \rho] &=Q_y\frac{\mathrm{d}}{\mathrm{d}\epsilon}\Big|_{\epsilon=0}y(u+\epsilon \rho)+\frac{\mathrm{d}}{\mathrm{d}\epsilon}\Big|_{\epsilon=0}^{y\, fixed}F[u+\epsilon \rho]\\
&=Q_y\LC \sin Q\cdot \rho - \partial_y^{-1}\cos Q\,Q_y\,\rho \RC+\frac{\mathrm{d}}{\mathrm{d}\epsilon}\Big|_{\epsilon=0}^{y\, fixed}F[u+\epsilon \rho].
\end{aligned}
\end{eqnarray*}
On the other hand, in terms of the Fr\'echet derivative
\begin{equation*}
\frac{\mathrm{d}}{\mathrm{d}\epsilon}\Big|_{\epsilon=0}F[u+\epsilon \rho]=\mathcal{D}_{F[u]}(\rho)=\frac{1}{1+u_x^2}\partial_x \rho=\cos Q\cdot \partial_y \rho.
\end{equation*}
We thus deduce that 
\begin{equation}\label{lem3.2-1}
\frac{\mathrm{d}}{\mathrm{d}\epsilon}\Big|_{\epsilon=0}^{y\, fixed}F[u+\epsilon \rho]=\big( \cos Q \cdot \partial_y-Q_y\sin Q+Q_y\partial_y^{-1}Q_y\sin Q\big)\rho.
\end{equation}

Next, by the assumption, we have
\begin{eqnarray*}
\frac{\mathrm{d}}{\mathrm{d}\epsilon}\Big|_{\epsilon=0} \mathcal{H}(u+\epsilon \rho)=\frac{\mathrm{d}}{\mathrm{d}\epsilon}\Big|_{\epsilon=0} \bar{\mathcal{H}}_n\left(F[u+\epsilon \rho]\right).
\end{eqnarray*}
Then, the usual definition of variational derivative leads to 
\begin{equation}\label{lem3.2-2}
\frac{\mathrm{d}}{\mathrm{d}\epsilon}\Big|_{\epsilon=0}\mathcal{H}(u+\epsilon \rho)=\int \frac{\delta \mathcal{H}}{\delta u}\,\cdot \rho\,\mathrm{d}x.
\end{equation}
While, in view of  \eqref{lem3.2-1} and integration by parts, we obtain 
\begin{eqnarray*}
\begin{aligned}\label{lem3.2-3}
\frac{\mathrm{d}}{\mathrm{d}\epsilon} & \Big|_{\epsilon=0} \bar{\mathcal{H}}_n\left(F[u+\epsilon \rho]\right) =\int \frac{\delta \bar{\mathcal{H}}_n}{\delta Q}\,\cdot \frac{\mathrm{d}}{\mathrm{d}\epsilon} \Big|_{\epsilon=0}^{y\, fixed}F[u+\epsilon \rho]\,\mathrm{d}y\\
&=\int \frac{\delta \bar{\mathcal{H}}_n}{\delta Q}\,\cdot \big( \cos Q \cdot \partial_y-Q_y\sin Q+Q_y\,\partial_y^{-1}Q_y\cos Q\big)\,\rho\,\mathrm{d}y\\
&=-\int \rho \big( \partial_y \cos Q+Q_y \sin Q+\cos Q\cdot Q_y\,\partial_y^{-1}\,Q_y\big) \frac{\delta \bar{\mathcal{H}}_n}{\delta Q}\,\mathrm{d}y+\int \cos Q\cdot Q_y\,\rho\,\mathrm{d}y \int \frac{\delta \bar{\mathcal{H}}_n}{\delta Q}\,Q_y\,\mathrm{d}y\\
&=-\int \rho \big( \cos Q \partial_y+\cos Q\cdot Q_y\,\partial_y^{-1}\,Q_y\big) \frac{\delta \bar{\mathcal{H}}_n}{\delta Q}\,\mathrm{d}y\\
&=-\int \rho \big( \partial_y+Q_y\,\partial_y^{-1}\,Q_y\big) \frac{\delta \bar{\mathcal{H}}_n}{\delta Q}\,\mathrm{d}x=-\int \rho \, \bar{\mathcal{K}} \frac{\delta \bar{\mathcal{H}}_n}{\delta Q}\,\mathrm{d}x.
\end{aligned}
\end{eqnarray*}
Here, we delete the boundary value in the integration by parts. Indeed, since $\bar{\mathcal{H}}_1(Q)=-\int Q_y^2/2\,dy$ is one of the Hamiltonian functionals for the equation $Q_{\tau}=\bar{K}_{-(l-1)}[Q]$ which belongs to the SG hierarchy \eqref{hie-sg}, we then deduce from the convolution property for the hierarchy of the Hamiltonian functionals that
\begin{equation*}
\int \frac{\delta \bar{\mathcal{H}}_n}{\delta Q}\,Q_y\,\mathrm{d}y=\int \frac{\delta \bar{\mathcal{H}}_n}{\delta Q}\,\partial_y^{-1}Q_{yy}\,\mathrm{d}y=\int \frac{\delta \bar{\mathcal{H}}_n}{\delta Q}\,\bar{\mathcal{J}}\frac{\delta \bar{\mathcal{H}}_1}{\delta Q}\,\mathrm{d}y=0,
\end{equation*}
where $\bar{\mathcal{J}}=\partial_y^{-1}$ is one of the Hamiltonian operators admitted by the SG hierarchy. Therefore, we conclude from \eqref{lem3.2-2} that \eqref{vdqu} holds true and prove the lemma.
\end{proof}

Finally, under the hypothesis of Lemma \ref{l4.2}, we define the functional
\begin{equation*}
\mathcal{H}(u)\equiv \bar{\mathcal{H}}_n(Q)
\end{equation*}
for each $n\in \mathbb{Z}$. From Lemma \ref{l4.2}, we see that  
\begin{equation*}
\frac{\delta \mathcal{H}(u)}{\delta u}=-\bar{\mathcal{K}}\frac{\delta \bar{\mathcal{H}}_n(Q)}{\delta Q}.
\end{equation*}
Meanwhile, Lemma \ref{l4.1} and the recrsive formula \eqref{spclhie} together with the relationship \eqref{l4.1-2} readily lead to 
\begin{equation*}
\frac{\delta \mathcal{H}(u)}{\delta u}=-\bar{\mathcal{K}}\bar{\mathcal{J}}^{-1}\frac{\delta \mathcal{H}_{-(n-1)}(u)}{\delta u}=-\mathcal{K}^{-1}\mathcal{J}\frac{\delta \mathcal{H}_{-(n-1)}(u)}{\delta u}=-\frac{\delta \mathcal{H}_{-n}(u)}{\delta u}.
\end{equation*}
The above identity allows us safely draw the conclusion that
 \begin{equation*}
\mathcal{H}(u)=-\mathcal{H}_{-n}(u),
\end{equation*}
from which
\begin{equation*}
 \mathcal{H}_{-n}(u)=-\bar{ \mathcal{H}}_{n}(Q), \qquad n\in \mathbb{R},
\end{equation*}
follows. Assembling the previous two lemmas, we conclude that, there exists an one-to-one correspondence between the two sequences of the Hamiltonian conservation laws $\{\mathcal{H}_n\}$ and $\{\bar{\mathcal{H}}_n\}$. We thus have proved 
Theorem \ref{t1.2}.

\vglue .83cm

\noindent {\bf Acknowledgements.}
The work of J. Kang is supported by NSF-China grant 11471260 and the Foundation of Shannxi Education Committee-12JK0850. 
The work of X.C. Liu is supported by NSF-China grant 11401471 and Ph.D. Programs Foundation of Ministry of Education of China-20136101120017. 

\vglue .35cm

\vglue .5cm

\end{document}